\renewenvironment{proof}{\paragraph*{Proof.} }{\hfill\qed}
\newcommand{\pkcssharp}{%
  {\settoheight{\dimen0}{PKCS}PKCS\kern-.05em \resizebox{!}{\dimen0}{\raisebox{\depth}{\#}}}}
 \title{Asymptotic Analysis of Plausible Tree Hash Modes for SHA-3}
 \author{Kevin Atighehchi\inst{1} \and Alexis Bonnecaze\inst{2}}
\institute{Aix Marseille Univ, CNRS, LIF, Marseille, France\\
\email{kevin.atighehchi@gmail.com}\\
\and
Aix Marseille Univ, CNRS, I2M, Marseille, France
\\
\email{alexis.bonnecaze@univ-amu.fr}}
\begin{document}

\maketitle

\begin{abstract}
Discussions about the choice of a tree hash mode of operation for a
standardization have recently been undertaken. 
It appears that a single tree mode cannot address adequately 
all possible uses and specifications of a system. 
In this paper, we review the tree modes which have been proposed, we discuss their problems and propose solutions.
We make the reasonable assumption that communicating systems have different specifications and that 
software applications are of different types (securing stored content or live-streamed content).
Finally, we propose new modes of operation that address the resource usage problem for three representative 
categories of devices and we analyse their asymptotic behavior.
\end{abstract}

\keywords{SHA-3, Hash functions, Sakura, Keccak, SHAKE, Parallel algorithms, Merkle trees, Live streaming}


\section{Introduction}
\subsection{Context}

In this article, we are interested in the parallelism of cryptographic hash functions. Depending on the nature of the application, we  either seek 
\begin{itemize}
\item to first reduce the (asymptotic) parallel running time, and then the number of involved processors, while minimizing the required memory of an implementation using 
as few as one processor;
\item or to obtain an asymptotically optimal parallel time while containing the other computational resources.  
\end{itemize}
Historically, a cryptographic hash function makes use of an underlying function, denoted $f$, having a fixed input size, like a compression function,
a block cipher or more recently, a permutation \cite{BDPA13,Sponge2011}.  This underlying function, 
so-called {\it inner} function according to the terminology of Bertoni \emph{et al.} \cite{BDPA13}, is called 
iteratively on the message blocks in order to process a message of arbitrary length. When this mode of operation is sequential, it makes it difficult to 
exploit parallel architectures.
In fact, a sequential (or serial) hash function can only use 
Instruction-Level Parallelism (ILP) and Single Instruction Multiple Data (SIMD) \cite{GK12a,GK12b}, because the amount of computation which can be done in parallel 
between two consecutive synchronization points is too small.
However, some operating modes use hash functions as \emph{inner} functions and can exploit a particular tree structure either for \emph{parallelism} or \emph{incrementality}\footnote{The use of a balanced binary tree is particularly efficient to update the hash of an edited message. 
The change of one block in the message requires an update of the digest in logarithmic time.}  purposes. 
A (non-degenerated) tree structure  not only
allows for further use of SIMD instructions, but also enables the use of \emph{multithreading} in order to process in
parallel several parts of a message on different processors/cores.

The Internet is recognized as a network interconnecting heterogeneous computers, and this becomes particularly true with the advent
of the Internet of Things (IoT). 
The choice of a hashing mode depends on the nature of the 
computing platforms
that process
the message. 
According to \cite{HBPT15}, IoT devices can be classified in two categories, 
based on their capability and performance: \emph{high-end IoT devices} which regroup
single-board computers and smartphones, and \emph{low-end IoT devices} which are much more resource-constrained. Based on the memory 
capacity of its devices, this last category has been further subdivided \cite{rfc7228}
by the Internet Engineering Task Force (IETF). 
The time and space complexities to execute the tree mode are important for resource-constrained systems,
especially when this execution is sequential (\emph{i.e.} using a single processor).

In this article, we focus on the parallel efficiency of hash functions, depending on the chosen modes. 
What we investigate are the tree modes for an inner Variable-Input-Length function.
Instead of working at finite distance, 
we 
consider big-O asymptotic complexities  as it is usually done in 
other topics of algorithms and data structures
({\it e.g.} sorting and searching algorithms).
We choose to segment the parallel computers into 3~categories, each of which 
 can be mapped to a dedicated hashing mode:
 resource-constrained devices (no matter how low their CPU and RAM resources are), devices dedicated to critical applications (having abundant and possibly 
 specially-dedicated resources), and a last category which could constitute a middle-range. Such a mapping is valid only if communicating peers are of a same category.
 Otherwise, the use of a hashing mode dedicated to resource-constrained devices prevails.
 This work is not devoted to lightweight hash functions and the inner function that we take as example is based on the standardized \textsc{Keccak}~\cite{BDPA13}.
 However, there are no impediments to using some of our tree modes with an inner lightweight hash function. 

\subsection{Computational model and terminology}
Since a sequential 
hash function iterates a ``low-level'' primitive
on fixed-size blocks of the message (with maybe a constant number of added blocks for padding or other coding purposes), 
its running time is asymptotically linear in the message size. In terms of memory usage, such a function needs to store
only one hash state during its execution. The hash state size corresponds to the output width of the inner function.
For instance,
in the case of sponge-based hash functions like \textsc{Keccak}, 
the hash state size corresponds to the width of the underlying permutation.

Several conventions exist to describe tree structures in the context of parallel hashing. The first convention, denoted C1 and often used to deal with Merkle trees, 
consists in considering a node as the result of $f$ applied on the concatenation of its children. A leaf is the result of the inner function applied 
on an individual block of the message. A second convention, denoted C2, is a variant of C1 in which the leaves are simply the blocks of the message. The term 
\emph{hash tree} is used to refer to this type of tree, where the nodes are the $f$-images and the leaves are the blocks of the message.
The last encountered convention \cite{BDPV09,BDPV14_Suf}, denoted C3, considers the nodes as being the inputs of the inner function. Throughout this paper, 
unless otherwise specified, we use the convention~C3. Thus, a node is an $f$-input, while a chaining value is an $f$-image. 
The tree height, denoted $h$, corresponds to the number of levels, indexed from $1$ (for the base level) up to $h$
(for the final level). For the sake of simplicity, our results are derived by considering the nodes as containing 
either chaining values or message blocks, but not both. A level is said to be of arity $a$ if all its nodes contain exactly $a$ chaining values 
(or message blocks), except one node whose arity may be smaller (this is an ancestor node of the last message block processed). 

In order to establish  our complexity results, we use the classic PRAM (Parallel Random Access Machine) model of computation assuming the strategy EREW (Exclusive Read Exclusive Write).
Our results are presented in terms of big-O (symbol~$O$). Our proofs use the asymptotic equivalence 
(symbol~$\sim$) and sometimes the big-Theta notation (symbol~$\Theta$).

For a given set of parameters that characterizes the tree topology of a scalable tree mode, the ``ideal'' parallel running time corresponds 
to the time needed to compute the final digest when the number of processors is not \emph{a priori} bounded. 
Actually, this number of processors is a function of the message size and 
the chosen parameters. With the convention used for a node (\emph{i.e.} an $f$-input), 
it corresponds exactly to the number of nodes that embed message bits.
Such a definition makes the assumption that nodes embedding both message bits and chaining values bits position the former before the latter 
(\emph{i.e. a message part followed by one or more chaining values}, termed as \emph{kangaroo hopping} in~\cite{BDPV14_Sak}).
The term \emph{amount of work} refers to the total number of calls to the low-level primitive (\emph{e.g.} the permutation of \textsc{Keccak}) 
performed by the processors. It can be seen as the \emph{sequential} running time. We say that it is asymptotically optimal
if it is equal to the asymptotic running time---in $O(n)$---of a traditional (serial) hash function. As regards tree-based hash functions, 
the worst \emph{amount of work} is obtained when the node arities are small and the tree height unrestricted. For instance, the amount of work
to compute the root of a classic Merkle binary tree is at least twice that of a serial hash function.
For a parallel algorithm, the amount of work is lower than or equal to the product number of processors $\times$ parallel time. 
We say that the number of processors is optimal for the desired parallel time if this product is in~$O(n)$.

\subsection{Our contributions} 

After a concise state of the art on tree hash modes, we describe strategies for sequential and parallel implementations of tree hash modes, with particular attention to the memory consumption.
For a sequential implementation, we show that the memory consumption of a tree hash mode is asymptotically linear in the height of the produced tree, whatever its node arities.
We then propose several scalable tree modes using \textsc{Sakura} 
coding~\cite{BDPV14_Sak}. These modes 
address the memory usage problem, the parallel time and the 
required number of processors:
\begin{itemize}
 \item We show how to parameterize the tree topology and give 3 modes (4S, 5S and 6S) suitable for the hashing of (streamed or not) stored content.
 \item Then, we show that it is interesting to have a sequence of increasing levels arities, or even levels of increasing node arities. 
 This leads us to propose 3 modes (4L, 5L and 6L) suitable for the 
 hashing of live-streamed content. While at first glance this seems somewhat contradictory, we show that without knowing in advance the size of the message, 
 these~3 adaptive tree constructions actually lead to different asymptotic complexities.
 \item We discuss the way of decreasing the number of processors required to obtain the ideal (asymptotic) parallel running time.
 The proposed modes then become optimal with respect to the \emph{amount of work}.
 \item We give some guidelines for the use of interleaving.
 \item We make suggestions for generating a digest of arbitrary length using a parallel construction.
\end{itemize}
\subsection{Organization of the article} After  a brief survey on tree hash modes, Section~\ref{prel} contains 
background information
regarding hash functions and tree hash modes. We discuss their security, implementation strategies and their time-space efficiency. Using a 
parameterizable tree hash mode described in Section~\ref{parameterizable_mode}, we derive several modes 
addressing the memory usage problem,  the parallel time and the number of processors. 
In particular, Section~\ref{param_stored} gives parameters that produce tree topologies suitable for streamed stored
content. Then, parameters suitable for live-streamed content are given in Section~\ref{param_live}. 
Finally, we discuss in Section \ref{Interleave_discussion}
how we can conciliate scalability and interleaving, and concluding remarks are given in the last section.


\section{Overview of tree hash modes and motivations}\label{survey}
A tree hash mode uses an inner function $f$ to compute the hashes of nodes based on 
the values of their children. Depending on the target application, the result can simply be the final digest of the hash tree (\emph{i.e.} the hash of the root node), 
or all the computed $f$-images.
Tree hashing is due to Merkle and Damgård \cite{Dam90,Mer79} and has several applications: \emph{Post-Quantum Cryptography} with Merkle signatures, 
\emph{Incremental Cryptography}, \emph{Authenticated Dictionaries}, and the field we are concerned with here, \emph{Parallel Cryptography}.

Tree hash modes have been proposed in the SHA-3 candidates Skein \cite{FLSWBKCW09} and MD6 \cite{RABCDEKKLRSSSTY08}, and also in BLAKE2 \cite{ANWW13}. These 
tree modes are slightly parameterizable since the arity of the tree can be chosen. 
Better still, in Skein, the node arities are slightly more customizable: 
a parameter $\lambda_{in}$ indicates that the inner nodes (\emph{i.e.} nodes of level~$\geq 2$) are of arity $2^{\lambda_{in}}$  
and a parameter $\lambda_{leaf}$ indicates that the base level nodes each contain $2^{\lambda_{leaf}}$ message blocks. 
Skein, BLAKE2 and MD6 have also a parameter restricting the tree height.
If this last parameter has a too small value, the root node  can have an arity proportional to the size of the message.

 Bertoni \textit{et al.} \cite{BDPV09,BDPV14_Suf} give sufficient conditions for a tree-based hash function to ensure its indifferentiability
from a random oracle. 
They define the flexible \textsc{Sakura} coding \cite{BDPV14_Sak} which ensures these conditions, and enables any hash algorithm using it to be indifferentiable from
a random oracle, automatically. 
More particularly, if all the tree hash modes are compliant with this coding and operate the same inner function, (trivial) inter-collisions are avoided---or are 
merely reduced to collisions of the inner function.

They also propose  several tree hash modes for different usages. 
We can compare the efficiency of these algorithms using Big-O notation. 
For example, a mode---called Mode 1 in this article, ``final node growing'' in~\cite{BDPV14_Sak}, or ``unlimited fanout'' in~\cite{ANWW13}---can make use of a tree of height 2, defined in the following way:  the message is divided into fixed-size chunks which have to be hashed 
separately. The hash computations are distributed among the processors, and the concatenation of the resulting digests is sequentially hashed by a single processor. 
The advantages of this mode is its scalability (the number
of processors can be linear in the number of blocks) and its reduced memory usage when executed sequentially. Its drawback is its ideal running time which remains
linear in the message size. In Mode~2, the message is divided into as many parts (of roughly equal size) 
as there are processors so that each processor hashes each part, 
and then the concatenation of all the results is sequentially hashed by one processor. 
In order to divide the message into parts of roughly equal size, the size of the message is required as input to the algorithm, which limits its use to the hashing 
of stored (or streamed stored) contents.
Bertoni \textit{et al.} use an idea from Gueron~\cite{Gue14} to propose a variant (Mode~2L)
which still makes use of a two-level tree and a fixed number of processors, but this one interleaves 
the blocks of the message. This interleaving, which consists in distributing the message bits (or blocks) in a \emph{round-robin} fashion among $q$ clusters, 
has several advantages. It allows an efficient parallel hashing of a streamed message, 
a roughly equal distribution of the data processed by each processor in the first level of the tree (without prior knowledge of the message size), 
and finally a correct alignment of the data in the processors' registers (for SIMD implementations). The major drawback of this interleaving is that the memory consumption is 
$O(q)$ if the message bits have to be processed in the order of their arrival, 
no matter whether this tree hash function is sequentially implemented or not.
Finally, the classic binary tree, in Mode~3, offers the best ideal running time
but it consumes a lot of storage when executed by a single processor. 
Let $M$ be a message of $n$ blocks, each block being of fixed-length $N$. 
Table~\ref{Efficiency_StoA} compares the efficiency of these algorithms.

\begin{table*}
\centering
 \begin{tabular}{||>{\centering\arraybackslash}p{0.9cm}||>{\centering\arraybackslash}p{1.4cm}|>{\centering\arraybackslash}p{1.7cm}|>{\centering\arraybackslash}p{1.6cm}|>{\centering\arraybackslash}p{2cm}|>{\centering\arraybackslash}p{3.6cm}||} 
   \hline
    Mode & Live streaming & Memory (sequential) 
    & Number of processors & Parallel running time (ideal case) & Comments\\
    \hline
    \hline
    1  & \checkmark & $1$  & $n$ & $n$ & root of unlimited arity  \\
    \hline
    2S & - & $1$  & $q$ & \multirow{2}{*}{$n/q$} & \multirow{2}{*}{\shortstack{not scalable (but\\ reduced amount of work)}}\\
    2L & \checkmark & $q$  & $q$ & & \\
    \hline
    3 & \checkmark & $\log n$ & $n$ & $\log n$ & tree of unlimited height \\
    \hline
\end{tabular}
\vspace{0.2cm}\\
\caption{Asymptotic efficiency using Big-O notation of existing tree hash modes, 
where $n$ is the number of blocks of the message. 
Mode~2 is dedicated to a ``fixed'' number $q$ of processors. 
Its asymptotic efficiency is given without hiding the quantities $q$ and $1/q$ in the Big-O.}
\label{Efficiency_StoA}
\end{table*}

There was a debate \cite{luk13,kel14} about the way of standardizing tree hash modes. On the one hand, some  wanted
a single and simple tree hash mode allowing unrestricted depth\footnote{The term unrestricted can be misused. For some of these modes, 
there is a parameter defining the depth of the tree, and 
this one can be set large enough so that, in practice, it does not have any impact on the tree topology.} (like Skein \cite{FLSWBKCW09}, MD6 \cite{RABCDEKKLRSSSTY08} or 
BLAKE2 \cite{ANWW13}), with maybe several sets of parameters for the node arities.
These tree topologies are flexible and have a good potential parallelism, because they support live streaming, are scalable and allow 
a nice ideal speedup (in running time).
The problem is that, when its height is unbounded, such a tree brings a performance penalty for sequential execution, as the memory consumption and the
amount of \emph{work} (\emph{i.e.} computations) are much greater than for a serial (traditional) hash function. Note that the asymptotic efficiency of such a tree
is the same as that of Mode~3. However, if a parameter restricts its height (as allowed by Skein, MD6 or BLAKE2), 
its asymptotic (parallel) efficiency can fall into the case of Mode~1.

The choice of a tree hash mode, such as Mode~3, could be motivated by the potential speedup obtained in ideal conditions.
Its drawbacks are both the height of the produced tree structure which increases logarithmically with the size of the message and 
a substantial addition of computations. 
Most existing tree-based hash functions propose 
a parameter to limit the height of the tree, that we denote~$t$. If this parameter is set, there exists a message size threshold, denoted $l_t$, 
from which the final node will grow proportionally with the message size, thus leading to a performance penalty. 
Let us suppose that we have a new mode, 
denoted Mode~X, which, for message sizes exceeding this value, constructs a tree of height $t$ offering a better potential 
speedup than Mode~3. If the message size is known in advance and the choice of a tree hash mode is still motivated by the potential speedup, 
we have the following simple but interesting strategy: \emph{Choose $\mathrm{Mode~X}$ if the message size exceeds $l_t$, and $\mathrm{Mode~3}$ otherwise}.

\begin{figure*}[h!]
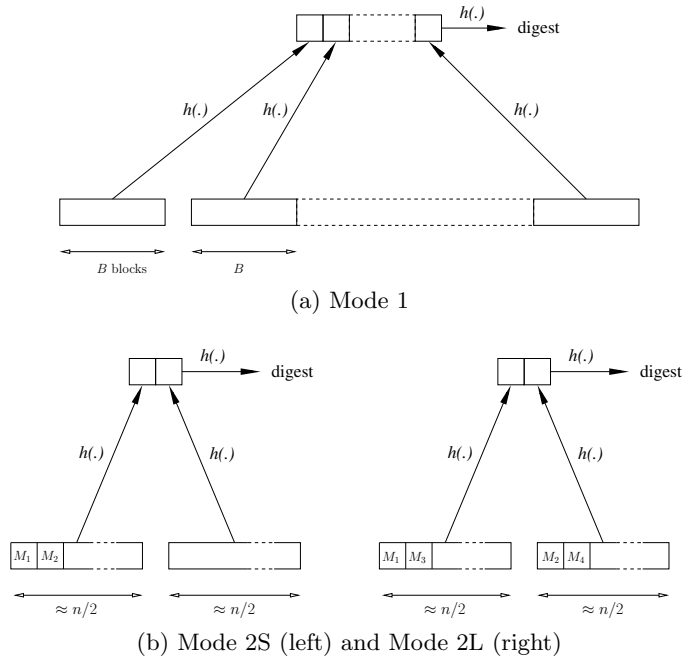

\begin{center}
    \subfloat[Mode 1]{
    \centering
    \scalebox{0.35}{
    \input{tree_2_levels_mode_1.pspdftex}
    }
    }
    
    \subfloat[Mode 2S (left) and Mode 2L (right)]{
    \centering
    \scalebox{0.35}{
    \input{tree_2_levels_mode_2.pspdftex}
    }
    }
\end{center}
\caption{Illustration of Mode 1 (at the top) and Mode 2 (at the bottom). At the top, the message is divided into chunks of fixed size $B$, 
while at the bottom it is divided into two chunks of roughly the same size ($\approx n/2$). In this example, Mode 2 is dedicated to the use of two processors.}
\label{illustration_1}
\end{figure*}

On the other hand, some  argue that there should be as many tree modes as application usages.
According to Kelsey~\cite{kel14}, there should be two standards: one standard for parallel hashing and one standard for tree hashing.
The standard for tree hashing would focus on trees of arbitrary (unrestricted) depth, with small node arities. These tree topologies 
are suitable for \emph{timestamping}, \emph{authenticated dictionaries} or \emph{Merkle signatures} (and their variants).
The standard for (fast) parallel hashing would focus on trees having a small height, because the evaluation of a hash function
should remain efficient on resource-constrained devices (having few memory and maybe a single processor). Indeed, as we will see later,
the memory consumption is linear in the tree height. Moreover, a small tree height means a reduced amount of \emph{work}. 
Since mid-2016, the standard for parallel hashing is available in NIST SP 800-185, with the specification of a \verb+ParallelHash+ function \cite{NIST800-185} 
adopting the tree structure of Mode~1.
At the same time, Bertoni \emph{et al.} \cite{BDPVV16} have proposed another variant of Mode~1 implemented with numerous optimizations, 
in particular the use of SIMD instructions, kangaroo hopping and a Round-Reduced version of \textsc{Keccak}.

Even for a client-server application, the memory consumption of a tree hash mode is a concern. Let us take the example of a 
\emph{Cloud Storage} application making use of Mode~3.
If, for checking their integrity, a storage server computes simultaneously the hashes of
$n$-block files uploaded from $v$ clients, the total memory consumption is then $O(v\log n)$.
That has to be compared to the $O(v)$ required by a Mode~1 based hash function or a serial hash function (\emph{e.g.} SHA-3). These differences should 
be taken into account when dimensioning the server.

Thus, the discussed modes are roughly the ones summarized in Table~\ref{Efficiency_StoA}.
Some illustrations are depicted in Figure~\ref{illustration_1} and Figure~\ref{illustration_2}.
Even if it is scalable and allows an optimal running time (in ideal conditions),  
Mode~3 seems to be left aside. It is just recommended for \emph{incremental hashing}, or for the other cited cryptographic algorithms and protocols.
Regarding 
the other proposed modes, it seems that a choice has to be made between scalability (Mode 1) 
and a reduced sequential part of the computation (the root node computation in Mode 2). In practice, for a small number of processors, 
this makes a difference. Asymptotically, in either case, the parallel time is still linear in the size of the message.

\begin{figure*}[h!]
\begin{center}
    \subfloat[Mode 3]{
    \centering
    \scalebox{0.35}{
    \input{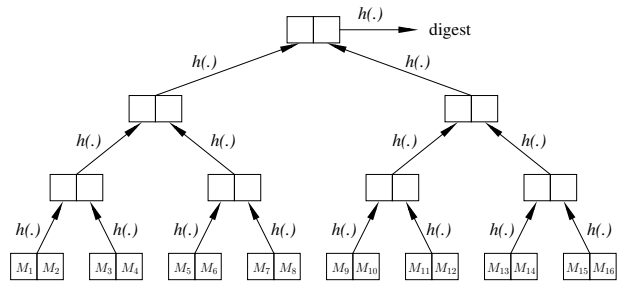}
    }
    }
    
    \subfloat[Variant of Mode 3 when restricted to 3 levels]{
    \centering
    \scalebox{0.35}{
    \input{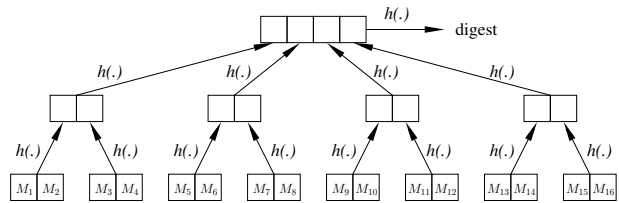}
    }
    }
\end{center}
\caption{Illustration of Mode 3 with and without height restriction, at the bottom and at the top respectively. At the bottom, the root node has an arity which 
grows proportionally with the size of the message.}
\label{illustration_2}
\end{figure*}

Mode 1 and Mode 2 have the advantage of requiring a constant memory consumption when executed sequentially. Nevertheless, under such a memory constraint, 
their ideal asymptotic parallel time is far from optimal. How can we reduce this time while leaving unchanged the space consumption? As regards Mode 3, the situation is 
reversed and we might wish to trade the logarithmic parallel time for a less decreasing function in order to decrease the memory footprint of a sequential implementation. 
The paper deals with these issues because it is unclear how to build tree toplogies allowing trade-offs between these two complexity criteria. 
Once a tree topology achieving a particular pair (number of processors, parallel time) has been characterized, the first component is not necessarily optimal 
with respect to the \emph{amount of work} and thus becomes another matter of interest.

\section{Preliminaries}\label{prel}

\subsection{Security}

Bertoni \textit{et al.} \cite{BDPV09,BDPV14_Suf}
give some guidelines to design correctly a tree hash mode $\tau$ operating an inner hash (or compression) function $f$. 
They define three sufficient conditions (\emph{message-completeness}, \emph{final-node-separability}, \emph{tree-decodability}) 
which ensure that the constructed hash function $\tau_f$, which makes use of 
an ideal 
inner function $f$,
is indifferentiable from an ideal hash function. They propose to use particular frame bits (\emph{i.e.} meta information bits)  
in nodes (\emph{i.e.} $f$-inputs) in order to meet these conditions. These frame bits characterize the type of node processed.

These conditions ensure that no weaknesses will be introduced when using the inner function. For instance, with \emph{tree-decodability}, an inner
 collision in the tree is impossible without a collision for the inner function.
Andreeva \textit{et al.} have shown in 
\cite{AMP10}
that a hash function indifferentiable from a random oracle
satisfies the usual security notions, up to a certain degree, such as pre-image and second pre-image resistance, 
collision resistance and multicollision resistance.

In the modes we propose, we use \textsc{Sakura} coding which is specified with an ABNF grammar~\cite{BDPV14_Sak}. 
\textsc{Sakura} enables any tree-based hash function using it to be automatically indifferentiable
from a random oracle, without the need of further proofs.
The coding used in a node depends on some information about it. For instance, if a node 
has children, the information about their number is encoded inside it using \textsc{Sakura}.

\subsection{Implementation strategies and complexities}

Since an inner (sequential) 
hash function processes a number of bits which is a multiple of a certain block size $N$, its time complexity behaves like a 
staircase function of this number of bits. 
We say that the time complexity of this iterated hash function $f$
for the operation $f(x)$ can be approximately described as an affine function of its input size
$l$ (in number of blocks)
whose coefficients depend on the  choice of the hash function and its parameters.
For instance, we can use a sponge construction \cite{BDPVsf07} like the \textsc{Keccak} algorithm \cite{BDPA13},
the new SHA-3 standard~\cite{SHA_standard,RPCKNN09}. 
The two important parameters of this sponge construction are the rate $r$ and the capacity $c$. This 
construction uses a permutation $P$ to process a state $S$ of $r+c$ bits at each iteration, and is divided into two phases: the absorbing phase
which processes the message blocks and the squeezing phase which generates the hash output.
The collision resistance and pre-image resistance strengths are related to the bit-size $c/2$.  Throughout the paper, we suppose the 
use of the \emph{extendable-output function} RawSHAKE256 which needs a capacity of 512 bits and which, according to the standard FIPS~202 \cite{fips202},  
can have a state size of 1600 bits and subsequently a rate of 1088 bits. For its use in tree hash algorithms, Bertoni \emph{et al.} \cite{BDPV14_Sak} suggest to set
its digest size equal to the capacity $c$.


\subsubsection{Memory usage.} The memory space used by the execution of a hash function can be divided into two quantities, the space used by the message to hash,
and the \emph{auxiliary space} used to execute the function on the message. The latter is particularly important for memory-constrained devices. Besides,
in the case of streaming applications, a message can be processed by a system as it arrives without being stored. In such a case, the total memory space used
is approximately reduced to the \emph{auxiliary space}. In this paper, we refer to the \emph{auxiliary space} when speaking of memory usage. 
A sequential hash function needs to store $\Theta(1)$ hash states in memory.
For evaluating $f$ on all the nodes of a tree of height $h$, a sequential implementation needs to store $\Theta(h)$ hash states in memory, regardless of the node arities. This 
memory consumption is due to a \emph{highest node first} strategy, which consists to apply $f$ on the highest node first. For a classic $k$-ary Merkle tree where $k$ is
a small number, this node is the highest node that has all its chaining values ready (\emph{i.e.} instantiated by $f$) to be processed. 
This strategy is particularly used in Merkle tree traversal techniques \cite{SZY04,BDS08}. For a 
tree having nodes of very high arity (\emph{e.g.} of arity $a$, 
possibly dependent of the message size), this strategy has to be changed in the following way:  start (or continue) to evaluate $f$ on the highest node 
that has $d$ chaining values
not yet consumed (with $d$ a constant much lower than~$a$). 
Thus, this number $d$ serves as a threshold value to trigger the (continuation/progression of the) processing by $f$ of the 
node. With such a variant, there is no need  
to wait until all the children be processed to process their parent node, and, as a consequence, there can at most be $d$ hash states in memory 
per level. An example is depicted Figure~\ref{highest_node_first_variant}.

\begin{figure}[t!]
\centering
\scalebox{0.6}{
\input{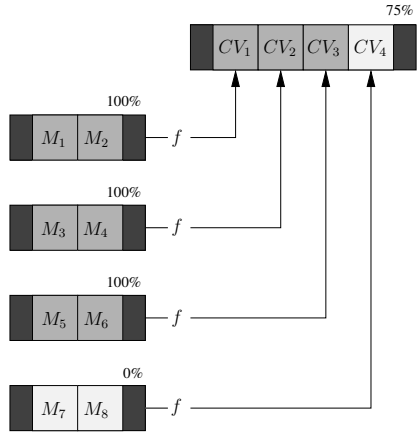}
}
\caption{Application of the variant of the \emph{highest node first} strategy. We represent the subtree of nodes containing the first 8 blocks of the message. 
Levels 1 and 2 are represented from left-to-right and nodes 
at a same level are represented vertically from bottom-to-top. In these nodes, black blocks contain possible frame bits and grey blocks message blocks 
or chaining values.
For reasons of simplicity, it is assumed that the cost to process frame bits is negligible and that one message block (or chaining value) is processed 
per unit of time. Percentages at the right corner of a node indicates the progression of its consumption by $f$, and dark grey blocks indicate 
which data have been used. We also suppose that
 $d=1$, meaning that the result of $f$ applied on a node is immediately used to advance the computation of its parent node. 
As a result, when 9 units of time have elapsed, the node at level 2 has been consumed by $f$ up to the three-quarter mark.}
\label{highest_node_first_variant}
\end{figure}

\subsubsection{Multithreading.} Having multiple processors/cores, we would like to use them to improve the computation time. 
We need to employ \emph{multithreading} to distribute,
by means of \emph{working threads}, the parallel computations among these processors. 
We assume here that we have $p$ processors and that we use a fixed thread pool containing $p$ threads 
(one thread per processor). 
Multithreaded implementations are very efficient if the threads do not need to communicate and/or synchronize, or
as little as possible. Synchronization delays are indeed very time-consuming. Depending on the scheduling strategy used, 
tree hashing can require a lot of synchronizations. Many situations have to be explored:
\begin{itemize}
 \item 
If the message to hash is already available (\emph{i.e.} locally stored on the system), we recommend to assign\footnote{Assigning the processing of a subtree 
 to a processor means that the latter is responsible for evaluating $f$ on all its nodes.} to each thread one of the $p$ biggest 
subtrees (of approximately the same size). If the tree topology makes it hard to find $p$ biggest subtrees covering the whole message, an other solution is 
to seek the highest level having a number of nodes $q \geq p$, and to assign to each thread $\lceil q/p \rceil$ of the $q$ subtrees rooted at these nodes. 
For the rest of the nodes, the method can vary. For instance, we can distribute the evaluations of $f$ on the remaining nodes 
as fairly as possible between the processors (at the cost of some synchronizations), or merely perform these computations sequentially. 
Such a strategy reduces greatly the number of synchronization points, and thus improves performances. 
Note that a thread processes its subtree as done by a sequential implementation, \emph{i.e.}, using
the \emph{highest node first} algorithm described above.
 \item If the message is received via a streaming system (no matter if it is stored or generated on the fly on the remote server), 
 the allocation strategy is necessarily fine-grained, with 
 a grain size depending on the bandwidth with which the message is received. Thus, compared to the previous case, such a parallel implementation 
 has to cope with more synchronization delays. Assuming that the link bandwidth is not a problem, we describe here a scheduling strategy for the 
 simple case where the tree arity $a$ is small. This one could be named \emph{higher level first} as it is similar to the strategy used 
 in a sequential implementation. 
 For a set of $p$ threads, at each level we use a buffer which can receive $p a$ values (they are message blocks or chaining values). 
 At any time, the threads are working at a same level of the tree and their goal is to fill up as soon as possible the highest level buffer.
 On a same level these threads have to apply $f$ on $p a$ nodes in order 
to move up and push $p$ chaining values in the buffer of the next level. Once these $p$ chaining values are computed, 
the buffer at the level below is flushed (\emph{i.e.} 
the $p a$ just processed chaining values are removed). If the current level occupied by threads is greater than $1$ and the lack of
resources on the level below prevents them from filling up the current buffer (\emph{i.e.} they cannot finish the computation of the  $p a$ values), 
then they return down to level 1, otherwise they continue, and so on. We let the reader deduce a termination phase for the end of the message, 
where buffers' contents of less than $p a$ values have to be processed. Note that depending on the throughput with which the message is received, 
this strategy could be adapted to process subtrees instead of nodes with the aim of increasing the amount of work done between two synchronization points. 
In other words, there could 
be a trade-off between the algorithm we just described and the one presented at the first point above. An other simple alternative to increase the amount
of work between two synchronization points is to consider buffers of $pak$ values, with a parameter $k > 1$ to select appropriately.
 \item There are possibly other situations in which the message blocks are not received in the order. They could be interleaved in a certain way for the need of 
 the hash function. In this paper, we do not discuss further more such a situation which seems unreasonable from a \emph{transport layer} standpoint 
 (\emph{e.g.}, in the OSI or DOD layered model for network protocol).
\end{itemize}
For processing a tree of height $h$, a parallel implementation using $p$
processors requires $O(ph)$ hash states in memory.

\subsubsection{SIMD implementations.} 
The single-instruction multiple-data (SIMD) units are present in a modern x86 processor or core. Well known instruction sets are
MMX, 3DNow!, SSE, SSE2, SSE3, SSSE3, SSE4, AVX and AVX2. 
These units can apply a same operation on several data simultaneously. They are thus very useful to compute several instances of 
a hash (or compression) function in parallel.  The size of SIMD registers
determines how many parallel hash (or compression) functions can be evaluated.

\subsubsection{Interleaving.} It is very useful for streamed content and serves two purposes: first, for multithreaded implementations, it 
allows the feeding of each processor as soon as possible. If each processor is responsible for the computation of a node, \emph{interleaving} allows the
distribution of the message bits/blocks among the nodes in a \emph{round-robin} fashion. Second, for SIMD implementation, it allows a perfect alignment
of the message blocks in the processor registers when loading the message sequentially. To the best of our knowledge, \emph{interleaving} has been proposed
to be used in a 2-level tree (Mode 2L of Table \ref{Efficiency_StoA}), under the name of $j$-lanes hashing \cite{Gue14}. In this proposition, the number
of nodes of the first level (\emph{i.e.} lanes) is a fixed parameter. Its use in trees of height $O(h)$, where $h$ is a function of the message size,
is an interesting question that will be discussed in Section \ref{Interleave_discussion}.

\section{A parameterizable tree hash mode}\label{parameterizable_mode}

We remind the reader that the number of hash states in memory corresponds, in the worst case, to the height $h$ of the tree.
Besides, if we denote by $u_i$ the biggest arity at level $i$, for $i=1 \ldots h$, the ideal parallel running time is (asymptotically) bounded by 
a linear function of $\sum_{i=1}^h u_i$.
This bound is indeed achieved when almost all the node arities of each level $i$ are equal to $u_i$, but
we will show that certain tree-growing strategies yield a parallel time linear in $h+u_1$.
In what follows, we present a tree hash mode using \textsc{Sakura} which, when the parameters are adequately chosen, produces modes offering interesting trade-offs between
memory consumption, parallel running time and number of processors. 
The parameters suitable for streaming stored content and streaming live content will be discussed
in Section \ref{param_stored} and Section \ref{param_live} respectively.

\subsection{Notations}

Let $f$ be a hash function which takes as  input a message $M$ of an
arbitrary length, and maps it to an output $f(M)$ of a fixed bit-length $N$. More concretely, $f$ can be
the intermediate \emph{extendable-output function} $\textrm{RawSHAKE256}$~\cite{fips202} with a digest size of $N=512$ bits.
Beforehand, the following notations are used in the description of the hashing mode:
\begin{itemize}
 \item The operator $\|$ denotes the concatenation.
 \item $\mathrm{I2OSP}(x,xLen)$ is a function specified in the standard \pkcssharp1 \cite{pkcs1v22}. It converts a non-negative integer, 
 denoted $x$, into a byte stream of specified length, denoted $xLen$. 
 \item $\{xLen\}$ is a single byte that represents the binary encoding of $xLen$. The arity $x$ of a node
 is encoded by using $\mathrm{I2OSP}(x,xLen)$ and $\{xLen\}$, where $xLen$ is appropriately chosen. 
 \item The encoding of the arity $x$ is defined
 as:
 $$\mathrm{enc}(x):=\mathrm{I2OSP}(x,\lfloor \log_{256} x \rfloor+1)\|\{\lfloor \log_{256} x \rfloor+1\}.$$
 \item $0^*$ indicates a non-negative number of bits $0$ to be used in an $f$-input for padding, for alignment purposes. 
 We will assume that this is the minimum number of bits $0$
 so that the bit-length of the $f$-input is a multiple of a word bit-length (\emph{i.e.} 32 or 64~bits).
 \item $I$ is the \emph{interleaving block size} of a node that determines how the message bits are distributed over its children. To the first child are given the first
 $I$ bits, to the second child is given the second sequence of $I$ bits and so on. After reaching the last child, we return to the first child, and so on. 
 A child node (of a node having an \emph{interleaving block size}) can
 have its own \emph{interleaving block size}, meaning that the bits it is responsible for are distributed over its children according to this attribute.
 This process can be repeated recursively if several generations of descendants have their own \emph{interleaving block size}.
 The notation $I_{\infty}$ means the absence of interleaving.
 \item $\{I\}$ represents two bytes that encode $I$. It is defined with a floating point representation, with one byte for the mantissa 
 and one byte for the exponent. For the absence of interleaving, $\{I_{\infty}\}$ is represented (using hexadecimal notation) with the coded mantissa 0xFF and the 
 coded exponent 0xFF.
 We refer to the \textsc{Sakura} specification \cite{BDPV14_Sak} for further information.
\end{itemize}

These encoded elements serve to identify the type of node and to delimit the embedded \emph{message hop}, \emph{chaining hops} or \emph{kangaroo hops}, in accordance with 
\textsc{Sakura} coding \cite{BDPV14_Sak}. These hops embed useful data (message blocks and chaining values).
The term \emph{payload} of a node refers to the list of its embedded message blocks and chaining values.

\subsection{The tree hash mode}

A hash tree can be characterised by
a two-dimensional sequence $(u_{i,j})_{i \geq 1, j \geq 1}$ of arities, where $u_{i,j}$ is the arity of node $(i,j)$, 
by assuming that the nodes are indexed at a same level with $j$, and from the base level to the root node with~$i$.
 In other words, the node $(1,1)$ 
  covers the first message blocks at the base level of the tree. In case all the nodes of level $i$
 are of same arity $u_i$ 
 (except the last node arity which may be smaller), 
 this tree can be characterised by
 a sequence $(u_i)_{i \geq 1}$ of arities.

Given a message $M$ of bit-length $|M|$ and a two-dimensional sequence 
of arities $(u_{i,j})_{i \geq 1, j \geq 1}$, SHAKE256 can be defined with the following tree hash mode using \textsc{Sakura}:
\begin{enumerate}
  \item Let $l_0=\lceil |M|/N \rceil$ and $M_0=M$. The quantity $l_0$ is the number of blocks of $M$, where the last block may be shorter than $N$ bits.
  We set $i=0$.
 \item We split $M_i$ into blocks $M_{i,1}$, $M_{i,2}$, ..., $M_{i,l_{i+1}}$ where:~\\ 
 ($1$) $l_{i+1}=\inf\left\{k\mid \sum_{j=1}^k u_{i+1,j} \geq l_i\right\}$;~\\ 
 ($2$)~$M_{i,j}$ with $j < l_{i+1}$ is $u_{i+1,j}N$ bits long, and $M_{i,l_{i+1}}$ may be shorter than $u_{i+1,l_{i+1}}N$ bits. ~\\
The node arities $(u'_{j})_{j \geq 1}$ of the current level are defined as follows:
\begin{equation*}
u'_{j}:=
\left\lbrace
\begin{array}{ccl}
u_{i+1,j} & \mbox{for} & j=1 \ldots l_{i+1}-1,\\
l_{i} - \sum_{v=1}^{l_{i+1}-1} u_{i+1,v} & \mbox{for} & j=l_{i+1}.
\end{array}\right.
\end{equation*}
 Then, we check certain conditions to apply \textsc{Sakura} coding correctly:
 \begin{itemize}
  \item If $i=0$ and $l_{i+1} \geq 2$, we compute the message
  $$M_{i+1}:= \bigparallel_{j=1}^{l_{i+1}} f\left(M_{i,j}\|110^*0\right).$$
  \item If $i=0$ and $l_{i+1} = 1$, we compute the message
  $$M_{i+1}:= f\left(M_{i,1}\|11\right).$$
    \item If $i > 0$ and $l_{i+1} > 1$, we compute the message
  $$\mkern-64muM_{i+1}:= \bigparallel_{j=1}^{l_{i+1}} f\left(M_{i,j}\|\mathrm{enc}(u'_{j})\|\{I_{\infty}\}\|010^*0\right).$$
  \paragraph*{Remark:} In certain tree-growing strategies, all the nodes at a same level can have the same arity,
  except the rightmost one, the arity of which may be lower. For SIMD implementation purposes, it is suggested 
  that the number of padding bits for the rightmost $f$-input be such that all the $f$-inputs, at this level of the tree, have same length.
  \vspace{0.1cm}
  \item If $i > 0$ and $l_{i+1} = 1$, we compute the message
  $$\mkern-35muM_{i+1}:= f\left(M_{i,1}\|\mathrm{enc}(u'_{1})\|\{I_{\infty}\}\|01\right).$$
 \end{itemize}
 
 \item We set $i=i+1$. If $l_i=1$, we return the hash value $M_i$. Otherwise, we return to step 2.

\end{enumerate}

\noindent
Let us consider a node, denoted $\mathrm{R}_*$, and its child nodes, denoted $\mathrm{R}_i$ for $i=1 \ldots k$, indexed in the order in which their corresponding chaining 
values are processed in $\mathrm{R}_*$. Let us denote by $T(\mathrm{R})$ the parallel time to process a node $\mathrm{R}$ and all its descendants. For the tree hash mode defined above, we
have
\begin{equation}
 \label{Parallel_time_def}
 T(\mathrm{R}_*)=\Theta \left(\max_{i=1 \ldots k}\{T(\mathrm{R}_i)+(k-i+1)\alpha\}\right),
\end{equation}
where $\alpha$ is the time to process a chaining value or message block.

\paragraph*{Case of a fixed arity at each level.}
Using the proposed mode, we remark that 
$$\lceil \lceil \cdots \lceil \lceil n/u_1 \rceil / u_2 \rceil \cdots \rceil/ u_i \rceil = \lceil n/(u_1u_2 \cdots u_i) \rceil$$ 
for a sequence of (strictly) positive integers $(u_j)_{j=1 \ldots i}$.
Consider a sequence of arities $(u_j)_{j \geq 1}$. At level $i$, there are exactly $\lceil n/(u_1u_2 \cdots u_i) \rceil$ nodes. 
If this sequence has an increasing number of terms greater than or equal to~$2$, then
there exists an integer $h>0$ such that
$\prod_{j=1}^h u_j \ge n$
where $n$ is the number of blocks of the message.
This ensures that we obtain a tree structure since, at level $h$, it remains a single node, the final (root) node. 
The problem is to find a sequence of arities $(u_j)_{j \geq 1}$ such that the tree height $h$ is $O(g_1(n))$ and
$\sum_{j=1}^h u_j$ is $O(g_2(n))$, where $g_1(n)$ and $g_2(n)$ are the desired complexities. Indeed, the memory usage
of a sequential implementation and the ideal parallel running time are related to these two quantities. Note that the ideal parallel running time
is accurately related to $\sum_{i=1}^h (a \cdot u_i+b)$ with $u_i \geq 1$, and $a$ and $b$ strictly positive integer constants such that $\alpha=a+b$.

\paragraph*{Other tree-growing strategies.} The above method does not apply and a case-by-case analysis is required. 
~\\

Note that this parameterizable mode is as simple as possible and that certain computational overheads are not avoided. In particular, all
the leaves are at the same depth. To decrease slightly the number
of calls to the underlying permutation, the concept of kangaroo hopping\footnote{The 
number of calls to the permutation can be slightly decreased by replacing the first chaining value of a node by the payload of the $f$-input (node) 
by which this chaining value was produced. The payload of a resulting node corresponds to a message block followed by one or more chaining values. 
By using \textsc{sakura}, it is augmented with adequate frame bits. The message block is then encoded in a message hop and, if kangaroo hopping is used once or several times, 
the chaining values are encoded in one or several chaining hops.} \cite{BDPV14_Sak} can be used. Such an optimization
reduces the computational costs of a mode at finite distance but does not change its asymptotic behaviour.
~\\

In the following two sections, we propose three new modes for streaming stored content (4S, 5S and 6S) and their variants for streaming live content (4L, 5L, 6L). 
Two of them can be seen as refinements of Mode~1 and Mode~3. In Mode~6, we optimize the tree topology subject to the constraint of an (asymptotically) optimal 
parallel time, while in Mode~4 we optimize the tree topology for a restricted memory consumption. The third one, called Mode~5, represents a balanced compromise. Table \ref{our_modes} summarizes the complexities of these modes.
\begin{table*}
\centering
 \begin{tabular}{||p{9mm}||p{15mm}|p{23mm}|>{\centering\arraybackslash}p{1.5cm}|>{\centering\arraybackslash}p{2.4cm}|>{\centering\arraybackslash}p{2.7cm}||} 
   \hline
    \centering{Our} \centering{modes} & \centering{Live}\newline \centering{streaming} & \centering{Memory usage (sequential)}
    & \centering{Number of processors} & Parallel running time (ideal case) & Priority \\
    \hline
    \hline
    \centering{4S}  & \centering{-} & \centering{\multirow{2}{*}{$1$}}  & \multirow{2}{*}{$n^{1-\epsilon}$} & \multirow{2}{*}{$n^{\epsilon}$} &  \scriptsize   \multirow{2}{*}{Memory consumption} \\
    \cline{1-2}
    \centering{4L} & \centering{\checkmark} & & & & \\
    \hline
    \centering{5S}  & \centering{-} & \centering{\multirow{2}{*}{$\frac{\log n}{\log \log n}$}} & \centering{\multirow{2}{*}{$\frac{n \log \log n}{\log^{1+\epsilon} n}$}} & \multirow{2}{*}{$\frac{\log^{1+\epsilon}n}{\log \log n}$} & \multirow{2}{*}{\scriptsize  None (trade-off)} \\
    \cline{1-2}
    \centering{5L}  & \centering{\checkmark} &   &  & & \\
    \hline
    \centering{6S}  & \centering{-} & \centering{\multirow{2}{*}{$\log n - \log \log n$}} & \centering{\multirow{2}{*}{$\frac{n}{\log n}$}} & \multirow{2}{*}{$\log n$} & \scriptsize  \multirow{2}{*}{Parallel time} \\
    \cline{1-2}
    \centering{6L}  & \centering{\checkmark} &   &  &  & \\
    \hline
\end{tabular}
\vspace{0.2cm}\\
\caption{Asymptotic efficiency (using Big-O notation) of our tree modes, where $n$ is the number of blocks of the message and  
$\epsilon$ is a positive constant, such that $\epsilon<1$. The \emph{ideal parallel time}
refers to the running time when the number of processors is not \emph{a priori} bounded.}
\label{our_modes}
\end{table*}

\section{Parameters for streaming stored content}\label{param_stored}

Can we find a tree of sublogarithmic height on which a hash function could rely to benefit of 
a logarithmic parallel time?
We can try to find a height $h$ such that $hn^{1/h} = C \log n$, where $C$ is a constant.
The rationale is that we could hope for a sublogarithmic height at the counterpart of any multiplicative 
constant factor for the logarithmic parallel time.
Trying to solve this equation on real numbers enables us to answer negatively to the question raised here.
We have the equation $\log h + \frac{\log n}{h} = \log (C\log n)$ that we can write in the form 
$\frac{- \log n}{h} = \log (\frac{h}{C\log n})$. Continuing, we can obtain the form
\begin{equation}\label{eq_question}
\frac{-\log n}{h}e^{\frac{-\log n}{h}}=\frac{-1}{C}.
\end{equation}
We note that the Lambert $W$ function solves the equations of type $xe^x=y$, where the solution for the unknown variable $x$ 
is $W(y)$. By using it for solving Equation~\ref{eq_question}, we get $-\frac{\log n}{h}=W(\frac{-1}{C})$. Since the domain of $W$ is $[-1/e, \infty[$,
there is no real solution if $C < e$. If $C=e$, we have $W(\frac{-1}{e})=-1$ 
and then the only solution is $x=\log n$. If $C > e$, we have $-1/C \in ] -1/e, 0 [$ and then $W(-1/C)$ takes two 
real negative values, denoted $w_0$ and $w_{1}$, evaluated using the two branches $W_0$ and $W_{-1}$ of the Lambert function. 
Then, Equation~\ref{eq_question} has two solutions $h_0=-\frac{\log n}{w_0}$ and $h_1=-\frac{\log n}{w_{1}}$.~\\

When we have to hash a stored file or a streamed stored media, its size is known in advance. 
Assuming a fixed (but possibly different) arity at each level of the tree,
we can then use this information to
define a finite sequence of level arities $(u_i)_{i=1 \ldots h}$. 
We give three interesting 
sets of parameters:
\begin{itemize}
 \item \textbf{Mode 4S.} $u_i=\lceil n^{\epsilon} \rceil$ for all $i \in \llbracket 1, h \rrbracket$ 
 and $h=\frac{1}{\epsilon}$, with a positive constant $\epsilon < 1$ such that
$1/\epsilon$ is a strictly positive integer. For instance $\epsilon=1/2$.
 \item \textbf{Mode 5S.} $u_1=\left\lceil \frac{\log_2^{1+\epsilon}n}{\log_2 \log_2 n} \right\rceil$, 
 $u_i=\left\lceil \log_2^{\epsilon} n \right\rceil$ for all $i \in \llbracket 2, h \rrbracket$, and 
 $h=\left\lceil \frac{\log_2 (\lceil n/u_1 \rceil)}{\epsilon \log_2 \log_2 (\lceil n/u_1 \rceil)} \right\rceil + 1$, with a positive 
 constant $\epsilon < 1$.
 \item \textbf{Mode 6S.} $u_1=\lceil \log_c n \rceil$, $u_i=c$ for all $i \in \llbracket 2, h \rrbracket$ with an integer constant $c>1$,
 and $h=\left\lceil \log_c\left( \frac{n}{u_1} \right) \right\rceil + 1$.
\end{itemize}

Some examples of trees 
resulting from the application of Mode 4S and Mode 6S on small messages are illustrated in Figure \ref{Mode_4S_example} and Figure \ref{Mode_6S_example}.

\begin{figure}
\centering
\scalebox{0.6}{
\input{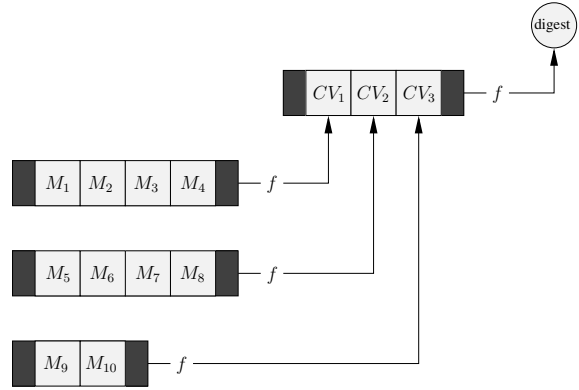}
}
\caption{Tree of nodes used by $f$ to compress a message of 10 blocks, assuming the use of Mode 4S 
with the parameter $\epsilon=1/2$. Levels 1 and 2 are represented from left-to-right and nodes 
at a same level are represented vertically from bottom-to-top. In these nodes, black blocks contain frame bits and light grey blocks message blocks 
or chaining values.}
\label{Mode_4S_example}
\end{figure}

\begin{figure}[t]
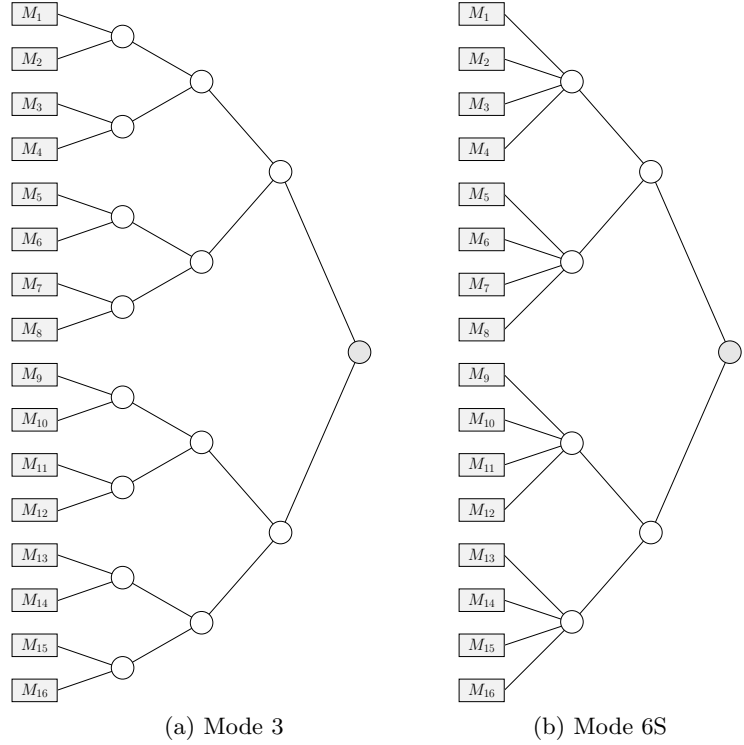

\begin{center}
    \subfloat[Mode 3]{
    \centering
    \scalebox{0.3}{
    \input{unrestricted_height_tree_2.pspdftex}
    }
    ~~~~~~~
    }
    \subfloat[Mode 6S]{
    \scalebox{0.3}{
    \input{unrestricted_height_tree_1.pspdftex}
    }
    }
\end{center}
\caption{Comparison of trees produced by Mode 3 and Mode 6S when applied on a message of 16 blocks. This compact representation uses convention C2 where nodes, depicted by circles,
are $f$-images. Mode 6S conserves the asymptotic parallel time of Mode 3 but reduces the number of processors, the memory usage and the amount of work (due 
to the reduced number of chaining values).}
\label{Mode_6S_example}
\end{figure}

\begin{theorem}
There are 3 tree hashing modes having the following efficiency complexities:
\begin{itemize}
 \item Mode 4S has an ideal parallel running time in $O(n^{\epsilon})$ by using only $O(n^{1-\epsilon})$ processors, 
 for a sequential memory consumption of $O(1)$ hash states. 
 \item Mode 5S has an ideal parallel running time in $O\left(\frac{\log^{1 + \epsilon} n}{\log \log n}\right)$ by using
 only $O\left(\frac{n \log \log n}{\log^{1 + \epsilon}n}\right)$ 
 processors, for a sequential memory consumption of $O\left(\frac{\log n}{\log \log n}\right)$ hash states. 
 \item Mode 6S has an ideal parallel running time in $O( \log n)$ by using only $O( \frac{n}{\log n} )$ 
 processors, for a sequential memory consumption of $O(\log(\frac{n}{\log n}))$ hash states. 
\end{itemize}

\end{theorem}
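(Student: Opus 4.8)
The plan is to treat Modes 4S, 5S and 6S one at a time, in each case reducing the three claimed complexities to estimates that follow from facts already established. First I would record the reductions valid for any fixed-arity mode of this kind. By the result that a sequential implementation stores $\Theta(h)$ hash states, the memory cost is $\Theta(h)$. Since every level-$1$ node has the same arity $u_1$, the base level contains $l_1=\lceil n/u_1\rceil$ nodes; as these are exactly the nodes carrying message bits (in our convention a node holds either message blocks or chaining values, never both), the characterisation of the ideal processor count gives that it is $\Theta(\lceil n/u_1\rceil)$. Finally, the ideal parallel time is $\Theta\!\left(\sum_{i=1}^h u_i\right)$: unrolling~(\ref{Parallel_time_def}), write $T_i$ for the time to process a full-arity node at level $i$ together with its subtree; then $T_1=\Theta(u_1)$, and since almost all children subtrees of a level-$i$ node have asymptotically equal size, all children are ready at time $\approx T_{i-1}$, so~(\ref{Parallel_time_def}) gives $T_i=\Theta(T_{i-1}+u_i)$ and hence $T_h=\Theta\!\left(\sum_{i=1}^h u_i\right)$, which also matches the earlier stated proportionality to $\sum_i(au_i+b)$ because $u_i\ge1$ forces $h\le\sum_i u_i$. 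Before applying these, for each mode I would check that the prescribed $h$ closes the tree, i.e. $\prod_{j=1}^h u_j\ge n$, so that by the identity $\lceil\lceil n/u_1\rceil/u_2\rceil=\lceil n/(u_1u_2)\rceil$ applied inductively the top level holds a single node.

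For Mode 4S everything is immediate. The $h=1/\epsilon$ levels all have arity $\lceil n^{\epsilon}\rceil$, so $\prod_j u_j=\bigl(\lceil n^{\epsilon}\rceil\bigr)^{1/\epsilon}\ge n$; the height is the constant $1/\epsilon$; $\sum_{i=1}^h u_i=\tfrac1\epsilon\lceil n^{\epsilon}\rceil=O(n^{\epsilon})$; and $l_1=\lceil n/\lceil n^{\epsilon}\rceil\rceil=O(n^{1-\epsilon})$. For Mode 6S, $u_1c^{\,h-1}\ge u_1\,c^{\log_c(n/u_1)}=n$ closes the tree; $h=\lceil\log_c(n/u_1)\rceil+1=O(\log(n/\log n))$; $\sum_{i=1}^h u_i=u_1+(h-1)c=\Theta(\log n)$ since $c$ is constant; and $l_1=\lceil n/\lceil\log_c n\rceil\rceil=\Theta(n/\log n)$.

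Mode 5S is the one needing care. The key observation is that $u_1$ is only polylogarithmic, so $u_1=n^{o(1)}$ and $n/u_1=n^{1-o(1)}$; therefore $\log\lceil n/u_1\rceil=\Theta(\log n)$ and, crucially, $\log\log\lceil n/u_1\rceil=\Theta(\log\log n)$, which is what allows one to pass freely between $n$ and $\lceil n/u_1\rceil$ inside iterated logarithms. Granting this, the prescribed height obeys $h=\Theta\!\left(\log n/\log\log n\right)$ (memory); $\sum_{i=1}^h u_i=u_1+(h-1)\lceil\log_2^{\epsilon}n\rceil=\Theta\!\left(\log^{1+\epsilon}n/\log\log n\right)$, as both summand blocks are of that order; and $l_1=\lceil n/u_1\rceil=\Theta\!\left(n\log\log n/\log^{1+\epsilon}n\right)$. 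That the tree closes amounts to $u_1\lceil\log_2^{\epsilon}n\rceil^{\,h-1}\ge n$; taking base-$2$ logarithms this is $(h-1)\log_2\lceil\log_2^{\epsilon}n\rceil\ge\log_2(n/u_1)$, which follows from $\log_2\lceil\log_2^{\epsilon}n\rceil\ge\epsilon\log_2\log_2 n\ge\epsilon\log_2\log_2\lceil n/u_1\rceil$ together with $h-1\ge\log_2\lceil n/u_1\rceil/\bigl(\epsilon\log_2\log_2\lceil n/u_1\rceil\bigr)$, which is just the definition of $h$.

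I expect the main obstacle to be exactly this Mode 5S bookkeeping: keeping the nested ceilings under control and, above all, justifying that the polylogarithmic factor $u_1$ is harmlessly absorbed into the $\Theta(\cdot)$ estimates when it appears inside iterated logarithms (equivalently, $\log\log(n/u_1)\sim\log\log n$). Once that is in place, Modes 4S and 6S and the three reductions above are routine.
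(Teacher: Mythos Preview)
Your argument is correct and, for Modes~4S and~6S, essentially coincides with the paper's. The difference is in Mode~5S. The paper treats the height formula as something to be \emph{derived}: it posits the target parallel time $\frac{\log^{1+\epsilon} n'}{\epsilon\log\log n'}$ for a tree on $n'$ leaves, sets up the equation $h'\,n'^{1/h'}$ equal to that target, rewrites it in the form $xe^x=y$, and solves with the Lambert~$W$ function (branch $W_{-1}$) to obtain $h'=O(\log n'/(\epsilon\log\log n'))$; it then grafts the $u_1$-ary base level underneath. You instead take the height as \emph{given} by the mode definition and verify directly that (i) the tree closes, via the logarithmic inequality, and (ii) the three complexities fall out, using only the polylog observation $\log\log\lceil n/u_1\rceil\sim\log\log n$. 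Your route is more elementary---no special functions, no asymptotics of $W_{-1}$---and is arguably the natural proof once the parameters are fixed; the paper's route has the advantage of explaining \emph{why} the height was chosen as it was, effectively reconstructing the design of the mode rather than just checking it.
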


\begin{proof}
We examine the modes, in terms of parallel running time, number of processors and (sequential) memory consumption, one after the other:
\begin{itemize}
 \item Mode~4S is consistent since $\lceil n^{\epsilon}\rceil^{1/\epsilon} \geq n$. Its asymptotic parallel running time
 is clearly $O(n^{\epsilon}/\epsilon)$ and the height of the tree is $O(1/\epsilon)$, where $\epsilon$ is a constant.
 The number of involved processors is $n/\lceil n^{\epsilon} \rceil=O(n^{1-\epsilon})$.
  \item With Mode~5S, the parallel running time is obtained by construction. Indeed, we consider a tree covering $n'$ blocks and 
  seek the smallest $h'$ such 
  that $h'n'^{1/h'}= \frac{\log^{1+\epsilon}n'}{\epsilon \log \log n'}$. We have:
  $$n'^{1/h'}=\frac{1}{h'} \frac{\log^{1+\epsilon}n'}{\epsilon \log \log n'} \Leftrightarrow \frac{-\epsilon \log \log n'}{\log^{\epsilon}n}=\left(\frac{-1}{h'}\log n'\right) e^{\frac{-1}{h'}\log n'},$$
  and it follows that $\frac{-1}{h'}\log n'=W\left(\frac{-\epsilon \log \log n'}{\log^{\epsilon}n'}\right)$. Thus, we have 
  $$h=\frac{-\log n'}{W(-(\epsilon \log \log n') \log^{-\epsilon}n')}.$$ 
  Since 
  $-\epsilon \log \log n' \log^{-\epsilon}n'$ is negative and tends to $0$, we can evaluate $h'$ thanks to the two branches $W_{-1}(.)$ or $W(.)$ of the Lambert function. 
  We choose to use $W_{-1}(.)$ since we have the asymptotic approximation 
  $$W_{-1}(x)= \Omega\left(\log(-x)-\log(-\log(-x))\right)$$ 
  when $x \in [ -1/e, 0 [$.
  Consequently, $h'$ is $O(\frac{\log n'}{\epsilon \log \log n'})$. 
  Thus, we could consider a tree of height $h'=\left\lceil \frac{\log_2 (n')}{\epsilon \log_2 \log_2 (n')} \right\rceil$
  and of arity $a' = \lceil n'^{1/h'} \rceil \approx \lceil \log_2^{\epsilon} n' \rceil$.
  We now enlarge the tree by adding to it, at the bottom, 
  one level of $n'=\lceil n/u_1 \rceil$ nodes of arity $u_1=\left\lceil \frac{\log^{1+\epsilon}n}{\epsilon \log \log n} \right\rceil$.
  The height of the new tree is $h = h' + 1 = O(\frac{\log n}{\epsilon \log \log n})$ and the sum of 
  level arities is $u_1 + (h-1)a' = O(\frac{\log^{1+\epsilon}n}{\epsilon \log \log n})$. 
  We let the reader checking the consistency of the mode. 
  \item Mode~6S is consistent since $u_1 c^{\lceil \log_c(n/u_1) \rceil} \geq u_1 c^{\log_c n} u_1^{-1} \geq n$. Its parallel time
  is $\lceil \log_c n \rceil + c\left\lceil \log_c\left( \frac{n}{\lceil \log_c n \rceil}\right) \right\rceil = O(\log_c n)$ and $h$ 
  is $O\left(\log_c\left( \frac{n}{\log_c n} \right)\right)$.
\end{itemize}
\end{proof}

\paragraph*{Remark.} 
One way to reduce the number of processors when applying a mode is to use a rescheduling technique\footnote{A rescheduling consists in
distributing the tasks of a parallel step to a reduced set of processors whose number is lower than usual (and thus may be lower than the number of tasks). 
This is performed at each parallel step in the hope of getting the best parallel time for such a number of processors. 
The underlying data structure formed by the dependencies between the tasks 
(\emph{i.e.} the algorithm) is unchanged.} at runtime.
However, there is a benefit in
reworking the tree for reducing this number and also some overheads.
If a mode uses a tree topology allowing a given asymptotic parallel time,
a variant reducing the number of processors required
while conserving (asymptotically) the aforementioned parallel time exists.
Let $T(n)$ be the ideal parallel time of a mode having $a=\alpha(n)$ as fanout for all the nodes. 
We consider the following tree topology:
The first level has approximately $n/T(n)$ nodes, each containing $T(n)$ message blocks, except maybe the rightmost one, where the number of message blocks can be smaller.
An $a$-ary tree is constructed on top of these $n/T(n)$ nodes. Thus, the first level is of arity $T(n)$ while the others are of arity $a$. The nodes are then formatted 
using \textsc{Sakura} coding in accordance with this tree representation.
The overall parallel running time to compute the digest is
approximately $T(n)+T(n/T(n))$, by using only $n/T(n)$ processors. The computational overheads in space and amount of work are reduced 
against a parallel time almost doubled.


\section{Parameters for streaming live content}\label{param_live}

In this section, we discuss the parallel operating modes which do not require the message size as input. 
These modes are scalable and can process the message
as it is received. They are  essential for any application making use of live streaming.


\paragraph*{Can we benefit from increasing arities at a same level of the tree?} We respond to this question with a sketch of
tree mode which minimizes the tree height and allows the processing of live-streamed content.
We suppose that we have $k+1$ processors denoted $P_0$, $P_1$, $P_2$, ..., $P_k$. Processors $P_1$, ..., $P_k$ each compute
the hash of a distinct message part\footnote{A message part consists of a certain number of blocks.}. These chaining values are denoted $cv_1$, ..., $cv_k$. 
Figure~\ref{Mode_4L_example} depicts a tree produced by this mode.
For the sake of simplicity, we suppose that \emph{kangaroo hopping} is not used. 
Processor $P_0$ then merely collects and processes 
these chaining values as soon as they are evaluated. We suppose that the processors start their computations at the same time 
($P_0$ is delayed since \textit{kangaroo hopping} is not used). It appears that during the processing of a chaining value $cv_i$ by $P_0$, Processor $P_{i+1}$
can process one or more message blocks. Thus, the message parts processed could be of increasing lengths without producing 
a significant impact on the running time of $P_0$. Let us suppose the following complexities:
\begin{itemize}
 \item the running time of $P_i$ is $ai+b$ where $a$ and $b$ are positive constants, for $i \geq 1$.
 \item the running time of $P_0$ is related to the number of chaining values. For instance, it is $a'k+b'$ where $a'$ and $b'$ are positive constants. The waiting time 
 for $cv_1$ is supposed to be taken into account.
\end{itemize}
We have $\max\{a'k+b', \max_{1 \leq i \leq k} \{ai+b\}\} = O(k)$, where $k$ is the smallest integer such that $\sum_{i=1}^k i \geq n$. The total parallel running time
is then in $O(n^{1/2})$. The following tree mode, denoted Mode~4L, generalizes this for a given constant tree height.

\afterpage{
\begin{figure}[t!]
\centering
\scalebox{0.6}{
\input{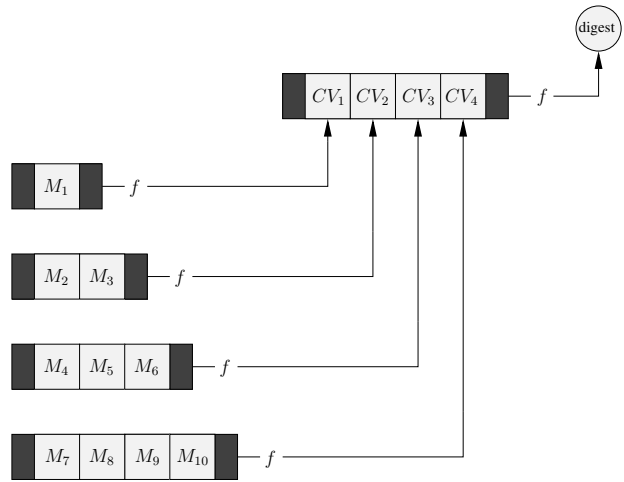}
}
\caption{Tree of nodes used by $f$ to compress a message of 10 blocks, assuming the use of Mode 4L for a minimized height.
Level 1 is composed of the nodes embedding the message blocks, while Level 2 is a single node embedding the chaining values.}
\label{Mode_4L_example}
\end{figure}

\begin{figure}[b!]
\centering
\scalebox{0.3}{
\input{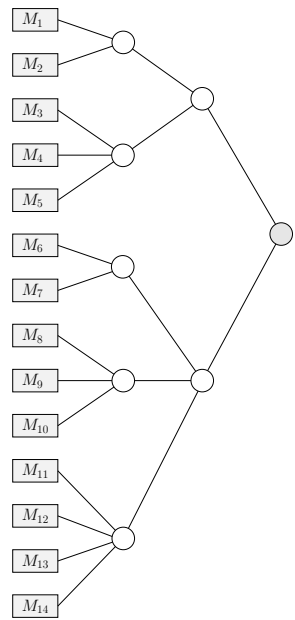}
}
\caption{Compact representation of a tree produced by Mode~4L for a 14-block message and a tree height $h=3$.}
\label{Mode_4L_examplebis}
\end{figure}
}

\paragraph{Mode 4L.} The underlying tree can be defined recursively until we obtain a given tree height $h$. Initially (before tree pruning), the root node is 
assumed to be of arity~$\infty$. Hence, 
a node of arity $k$ has its leftmost child node of arity $2$, its second child node from the left of arity $3$, ..., its $i$-th child node from the left of
arity $i+1$, ..., its rightmost child node of arity $k+1$. A node arity can also be computed directly. 
For instance, given a tree height $h = 4$, the node arities $u_{i,j}$ for $h \geq i \geq 1$ and $j \geq 1$ grow as follows:
\[u_{i,j}=\begin{cases} 
\infty & \mbox{if } i=4, \\ 
j+1 & \mbox{if } i = 3\ \mbox{or } (i < 3\ \mbox{and } j \leq 2),\\
2 + (j\mod{(k_1^2+k_1)/2}) & \mbox{if } i = 2\ \mbox{and } j > 2,\\
2 + (j\mod{(k_2^3+3k_2^2+2k_2)/6}) & \mbox{if } i = 1\ \mbox{and } j > 2,\\
\end{cases}
\]
where $k_1=\left\lfloor \frac{-1+\sqrt{1+8j}}{2} \right\rfloor$ and $k_2$ is the integer part of the positive solution of $k_2^3+3k_2^2+2k_2-6j=0$. A example
is depicted in Figure~\ref{Mode_4L_examplebis}.

\begin{theorem}
Mode~4L has an ideal parallel running time in $O\left( n^{\epsilon}\right)$ by using
only $O\left( n^{1-\epsilon} \right)$
processors, and requires  $O\left(\frac{1}{\epsilon}\right)=O(1)$ hash states in memory to process the message 
with a single processor, with $\epsilon=1/h$.
\end{theorem}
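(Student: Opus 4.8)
The plan is to analyze Mode~4L by bounding its three cost measures — ideal parallel time, number of processors, and sequential memory — in terms of the recursive arity structure. First I would recall from Equation~\ref{Parallel_time_def} that the parallel time of the root satisfies $T(\mathrm{R}_*)=\Theta(\max_i\{T(\mathrm{R}_i)+(k-i+1)\alpha\})$, and establish by induction on the height $h$ that, for a subtree whose root is at level $h'$ (counting from the leaves), the parallel time $T_{h'}$ to process a subtree covering $m$ blocks is $\Theta(m^{1/h'})$, provided the arities are laid out as specified (leftmost child of arity $2$, increasing up to arity $k+1$ for the rightmost). The base case $h'=1$ is immediate: a single node of arity $m$ costs $\Theta(m)=\Theta(m^{1/1})$. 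For the inductive step, a subtree of height $h'$ with $k$ children has its $i$-th child covering roughly $i\cdot T_{h'-1}(\cdot)$ blocks (so that each child subtree finishes in time proportional to its index), hence $k=\Theta((m/T_{h'-1})^{?})$; the key algebraic identity is the one already used in the ``sketch of tree mode'' paragraph, namely that $\sum_{i=1}^k i \geq n$ forces $k=\Theta(n^{1/2})$, and its higher-dimensional analogues $\sum_{i}\binom{i+r-1}{r}\geq n$ forcing the relevant index to be $\Theta(n^{1/(r+1)})$. Combining the recursion $T_{h'}=\Theta(k\cdot\alpha + T_{h'-1})$ with $T_{h'-1}=\Theta(m^{1/(h'-1)})$ and $k=\Theta(m^{1/h'})$ (which is exactly what the index layout buys us) gives $T_{h'}=\Theta(m^{1/h'})$.

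Next I would instantiate $h'=h$ and $m=n$ to conclude that the ideal parallel running time is $\Theta(n^{1/h})=O(n^{\epsilon})$ with $\epsilon=1/h$, matching the claim. For the processor count, I note that in convention~C3 the number of processors equals the number of base-level ($i=1$) nodes that embed message bits; since each such node has constant-to-small arity but their arities are increasing, the number of level-$1$ nodes needed to cover $n$ blocks is $\Theta(n / \bar{u}_1)$ where $\bar{u}_1$ is the typical base arity, and by the same index-growth computation the largest base arity is $\Theta(n^{1/h})=\Theta(n^{\epsilon})$, so the number of base nodes — hence processors — is $\Theta(n^{1-\epsilon})$, giving the $O(n^{1-\epsilon})$ bound. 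Finally, for the sequential memory I invoke the earlier general fact that a sequential implementation of a tree hash mode stores $\Theta(h)$ hash states regardless of arities (using the variant \emph{highest node first} strategy with threshold $d$, which keeps at most $d$ states per level); since $h=1/\epsilon$ is a constant, this is $O(1/\epsilon)=O(1)$ hash states.

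The main obstacle I anticipate is making the inductive claim $T_{h'}(m)=\Theta(m^{1/h'})$ genuinely tight rather than merely an upper bound: one must verify both that the child-index-$i$ subtree really can be assigned $\Theta(i\cdot m^{1/(h'-1)})$ blocks so the $\max$ in Equation~\ref{Parallel_time_def} is balanced, and that the arity formula given for Mode~4L (with the $k_1,k_2$ defined via the quadratic/cubic inverses) actually realizes this layout — i.e.\ that $2+(j\bmod{(k^2+k)/2})$ and its cubic analogue produce, across level~$1$ and level~$2$, the intended telescoping so that $\prod$ of level arities indeed reaches $n$ and each level's maximal arity is $\Theta(n^{1/h})$. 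This amounts to checking that $\sum_{i=1}^{k}\binom{i+h-2}{h-1}$ is $\Theta(k^{h})$, which is the standard hockey-stick identity, but bookkeeping the ceilings and the ``rightmost node may be smaller'' caveat cleanly is the fiddly part. Once that layout lemma is in place, the three complexity bounds follow directly by substitution, exactly as in the proof of the stored-content theorem.
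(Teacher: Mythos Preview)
Your approach is correct and arrives at the same bounds, but the paper takes a more direct route that avoids the induction entirely. Rather than proving $T_{h'}(m)=\Theta(m^{1/h'})$ level by level, the paper observes in one line (from Equation~(\ref{Parallel_time_def}) and the balanced arity design of Mode~4L) that the parallel time is affine in $a_{br}+h$, where $a_{br}$ is the arity of the rightmost base-level node; the point is that in Mode~4L every term $T(\mathrm{R}_i)+(k-i+1)\alpha$ in the max is essentially equal, so the critical path is simply the rightmost root-to-leaf chain. The paper then bounds $a_{br}\le x+h$ where $x$ is the (post-pruning) root arity, and computes the maximum number of blocks covered by an $h$-level tree with an $x$-ary root as the explicit nested sum
\[
n_{\max}(x)=\sum_{i_{h-1}=2}^{1+x}\cdots\sum_{i_1=2}^{1+i_2} i_1=\binom{x+h-2}{h-1}+\binom{x+h-1}{h}\sim \frac{x^h}{h!},
\]
which is exactly the hockey-stick computation you allude to, but carried out once globally rather than threaded through an induction. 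Inverting gives $x\sim(n\,h!)^{1/h}$ and hence $a_{br}+h=O(n^{\epsilon})$. For the processor count the paper uses the same nested sum stopping one level early (at $i_2$), which is $\Theta(x^{h-1})=O(n^{1-\epsilon})$; this is cleaner than your ``typical base arity $\bar u_1$'' heuristic, which gets the right answer but is really deriving the average arity \emph{from} the answer rather than the other way around.

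One place where your write-up should be tightened: the recursion you state, $T_{h'}=\Theta(k\alpha+T_{h'-1})$ with $T_{h'-1}=\Theta(m^{1/(h'-1)})$, conflates the time of the \emph{whole} height-$(h'-1)$ subtree with the time of a single child. The correct inductive step is that child $i$ (arity $i+1$) covers $\Theta((i+1)^{h'-1})$ blocks and takes time $\Theta(i+1)$ by the inductive hypothesis, so the max over $i$ of $\Theta(i+1)+(k-i+1)\alpha$ is $\Theta(k)$; meanwhile the total block count $m=\sum_{i}\Theta((i+1)^{h'-1})=\Theta(k^{h'})$ forces $k=\Theta(m^{1/h'})$. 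With that fix your induction goes through, and it is equivalent to the paper's argument---the paper simply compresses the induction into the single observation about $a_{br}+h$.
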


\begin{proof}
Let $a_{br}$ denote the arity of the righmost node at the base level, and let $h$ denote the height of the tree.
According to Equation~(\ref{Parallel_time_def}), the parallel time to compute the digest using a tree produced by Mode~4L is affine in $a_{br} + h$.
We suppose a tree of constant height $h=1/\epsilon$ where $\epsilon < 1$. Once we have pruned the unneeded branches in the tree, 
the root node has a finite arity denoted $x$. Considering a $h$-level tree produced by Mode~4L under the constraint of a $x$-ary root node,
the maximum number of blocks, denoted $n_{max}(x)$, covered by this tree is
\begin{eqnarray*}
\label{iter_sum}
\sum_{i_{h-1}=2}^{1+x}\ \sum_{i_{h-2}=2}^{1+i_{h-1}}\ \cdots\ \sum_{i_{2}=2}^{1+i_{3}}\  \sum_{i_1=2}^{1+i_{2}}\ i_1 &
= & \sum_{i_{h-1}=h}^{x+h}\ \sum_{i_{h-2}=h-1}^{i_{h-1}}\ \cdots\ \sum_{i_{2}=3}^{i_{3}}\  \sum_{i_1=2}^{i_{2}}\ i_1\\
& = & {x + h - 2 \choose h - 1} + {x + h - 1 \choose h}.
\end{eqnarray*}
By construction of the tree, $a_{br} \leq x+h$. Hence, we would like to express $x$ as a function of $n$. Given a message size $n$, 
we then seek the minimum $x$ satisfying $n_{max}(x) \geq n > n_{max}(x-1)$. Since $n_{max}(x)$ is polynomial in $x$, we have $\frac{n_{max}(x)}{n_{max}(x-1)} \underset{x\to \infty}{\longrightarrow} 1$, 
and since $n_{max}(x)>0$, it follows that $n_{max}(x) \sim n$. We can then seek $x$ such that 
$n_{max}(x) \sim n$. 
It can be shown that 
$$n_{max}(x) \sim \frac{x^h}{h!}.$$
Thus, we have $x \sim (n\, h!)^{1/h} \sim (2\pi h)^{1/(2h)}\ \frac{h}{e}\ n^{1/h}$ using the Stirling formula. Since $h=1/\epsilon$
is constant, we deduce that $a_{br} + h = O(n^{\epsilon})$.
The number of involved processors corresponds to the number of base level nodes, \emph{i.e.} the quantity
$$\sum_{i_{h-1}=2}^{1+x}\ \sum_{i_{h-2}=2}^{1+i_{h-1}}\ \cdots\ \sum_{i_{2}=2}^{1+i_{3}}\  i_2.$$
Given the estimated value $x = O(n^{1/h})$, it follows by composition of functions that this quantity is in $O(n^{\frac{h-1}{h}})$.
\end{proof}

\paragraph{Mode 5L.}The arities $u_{i,j}$ of the nodes $(i,j)$ for $i \geq 1$ and $j \geq 1$ grow as follows:
$$u_{i,j}=\begin{cases} 
\left\lceil \frac{\log^{1+\epsilon}_c(c+j)}{\log \log (c+j)} \right\rceil\ &  \mbox{if } i = 1\ \forall j \geq 1, \\ 
\lceil \log^{\epsilon}_c(c+j) \rceil\ &  \forall i > 1\ \forall j \geq 1,
\end{cases}$$
with an integer constant $c > 1$ and a positive constant $\epsilon < 1$.

\begin{theorem}
Mode 5L has an ideal parallel running time in $O\left(\frac{\log^{1 + \epsilon} n}{\log \log n}\right)$ by using
only $O\left(\frac{n \log \log n}{\log^{1 + \epsilon}n}\right)$
processors, and requires  $O\left(\frac{\log n}{\log \log n}\right)$ hash states in memory to process the message 
with a single processor.
\end{theorem}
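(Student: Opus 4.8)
The plan is to follow the template of the Mode~5S analysis and of the Mode~4L proof, reading off all three complexity quantities from two sequences attached to the tree produced on an $n$-block message: the sequence $(m_i)_{i\ge 0}$ of level widths, defined by $m_0=n$ and $m_i=\inf\{k \mid \sum_{j=1}^{k} u_{i,j}\ge m_{i-1}\}$ (so $m_i$ is the number of nodes at level~$i$), and the sequence of largest per-level arities $a_i^{\mathrm{br}}:=\max_j u_{i,j}$, which by monotonicity of $u_{i,j}$ in $j$ is the arity of the rightmost node of level~$i$. The number of processors in the ideal-time sense is the number of base-level nodes, i.e. $m_1$; the sequential memory is $\Theta(h)$ hash states where $h$ is the tree height; and, by Equation~(\ref{Parallel_time_def}), the ideal parallel time is controlled by the $a_i^{\mathrm{br}}$.

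The first step is an elementary partial-sum estimate: for a sequence $g$ that is regularly varying of index~$0$ (here $g(j)=\log_c^{1+\epsilon}(c+j)/\log\log(c+j)$ at level~$1$, and $g(j)=\log_c^{\epsilon}(c+j)$ at levels $i>1$) one has $\sum_{j=1}^{k}g(j)\sim k\,g(k)$. Applied at level~$1$, inverting $\sum_{j=1}^{m_1}u_{1,j}\ge n$ gives $m_1\sim n\log\log n/\log^{1+\epsilon} n$ (using $\log m_1\sim\log n$, which holds since $m_1=n^{1-o(1)}$), which is the claimed processor count, and $a_1^{\mathrm{br}}=u_{1,m_1}\sim\log^{1+\epsilon} n/\log\log n$. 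Pushing the same estimate up the tree, for $i\ge 2$ one gets $m_i\sim m_{i-1}/\log_c^{\epsilon}(m_i)$, and as long as $m_i$ stays $n^{\Omega(1)}$ we have $\log m_i\sim\log n$, so $m_i\sim m_{i-1}/\log^{\epsilon}n$, i.e. $m_i\sim m_1/\log^{(i-1)\epsilon}n$. This reaches a single node after $i-1\sim\log n/(\epsilon\log\log n)$ levels; once $m_i$ has dropped to $\mathrm{polylog}(n)$ at most $O(\log\log n)=o(\log n/\log\log n)$ further levels are needed since every arity is $\ge 2$. Hence $h=O(\log n/\log\log n)$, giving the memory bound.

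For the parallel time I would invoke Equation~(\ref{Parallel_time_def}): writing $T^*_i$ for the completion time of the slowest node of level~$i$, a level-$i$ node waits for the maximum over its (at most $a_i^{\mathrm{br}}$) children and then processes at most $a_i^{\mathrm{br}}$ values, so $T^*_i\le T^*_{i-1}+a_i^{\mathrm{br}}\,\alpha+O(1)$. Unrolling from the single root down to level~$1$ yields $T(\mathrm{root})=T^*_h=O\big(a_1^{\mathrm{br}}+\sum_{i=2}^{h}a_i^{\mathrm{br}}+h\big)$. Here $a_1^{\mathrm{br}}\sim\log^{1+\epsilon}n/\log\log n$ as above, while $a_i^{\mathrm{br}}=u_{i,m_i}\sim\log_c^{\epsilon}(m_i)=\Theta(\log^{\epsilon}n)$ for the first $\Theta(h)$ levels (those with $\log m_i\ge\frac12\log n$) and no larger afterwards, so $\sum_{i=2}^{h}a_i^{\mathrm{br}}=\Theta(h\log^{\epsilon}n)=\Theta(\log^{1+\epsilon}n/(\epsilon\log\log n))$, against which the additive $h=O(\log n/\log\log n)$ is negligible. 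Thus the ideal parallel time is $O(\log^{1+\epsilon}n/\log\log n)$, as claimed; consistency of the mode (that the tree closes to a single root) is automatic since all arities are eventually $\ge 2$, exactly as for Mode~5S.

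The step I expect to be the \emph{main obstacle} is making the two asymptotic estimates fully rigorous at once---the slowly-varying partial sums $\sum_{j\le k}g(j)\sim k\,g(k)$ with all the $\lceil\cdot\rceil$ roundings and the $c$'s, and the claim that $\log m_i\sim\log n$ over the bulk of the levels---together with the bookkeeping of the short tail where $m_i$ has become polylogarithmic; once the asymptotics of $m_i$ and $a_i^{\mathrm{br}}$ are in hand, the parallel-time bound is a one-line unrolling of Equation~(\ref{Parallel_time_def}).
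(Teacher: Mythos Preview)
Your proposal is correct and follows essentially the same approach as the paper's proof: invert the slowly-varying partial sum $\sum_{j\le k_1}u_{1,j}\sim n$ to get $k_1\sim n\log\log n/\log^{1+\epsilon}n$, then iterate the level-width recursion $m_i\sim m_{i-1}/\log^{\epsilon}n$ to find the height $h=\Theta(\log n/\log\log n)$, and finally multiply height by per-level arity to obtain the parallel time. Your treatment is in fact slightly more careful than the paper's sketch: you distinguish the level-$1$ arity $\log^{1+\epsilon}n/\log\log n$ from the upper-level arities $\log^{\epsilon}n$ in the recursion (the paper writes $k_i\sim k_1/\log^{(1+\epsilon)i}k_1$, which does not quite match the mode's definition but yields the same $O(\cdot)$ for $h$), you unroll Equation~(\ref{Parallel_time_def}) explicitly rather than simply asserting ``height $\times$ average arity'', and you address the short polylogarithmic tail of the height computation, which the paper leaves implicit.
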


\begin{proof}
The number of involved processors (or base level nodes) is the smallest $k_1$ such that 
$\sum_{j=1}^{k_1} \left\lceil \frac{\log^{1+\epsilon}_c(c+j)}{\log \log (c+j)} \right\rceil\ \geq n$.
Because this sum increases slowly, it can be shown that $\sum_{j=1}^{k_1} \left\lceil \frac{\log^{1+\epsilon}_c(c+j)}{\log \log (c+j)} \right\rceil\ \sim n$
and $\sum_{j=1}^{k_1} \left\lceil \frac{\log^{1+\epsilon}_c(c+j)}{\log \log (c+j)} \right\rceil \sim {k_1} \frac{\log^{1+\epsilon} k_1}{\log \log k_1}$.
We find the expected number of processors $k_1(n) \sim \frac{n \log \log n}{\log^{1+\epsilon}n}$.
Hence, we show that the height of the tree belongs to $O\left(\frac{\log n}{\log \log n}\right)$. 
We denote by $k_i$ the number of nodes at level $i \geq 1$. We have $k_2 \sim \frac{k_1}{\log^{1+\epsilon} (k_1)}$ and, more generally,
$k_i \sim \frac{k_1}{\log^{(1+\epsilon)i} (k_1)}$. We seek $i$ such that $k_i \sim 1$, \textit{i.e.} such that ${\log^{(1+\epsilon)i} (k_1)} \sim k_1$.
Applying the log function to both sides gives $i \sim \frac{\log k_1}{(1+\epsilon)\log \log k_1} \sim \frac{\log n}{(1+\epsilon)\log \log n}$.
Finally, the height $\times$ average node arity product gives the expected parallel 
time.
\end{proof}

\paragraph{Mode 6L.} The arities $u_{i,j}$ of the nodes $(i,j)$ for $i \geq 1$ and $j \geq 1$ grow as follows:
$$u_{i,j}=\begin{cases} 
\lceil \log_c(c+j) \rceil\ & \mbox{if } i=1, \\ 
c & \forall i > 1\ \forall j \geq 1,
\end{cases}$$
where $c > 1$ is an integer constant.

\begin{theorem}
Mode 6L has an ideal parallel running time in $O(\log n)$ by using only $O(\frac{n}{\log n})$ processors, 
and requires $O(\log \frac{n}{\log n})$ hash states in memory to process the message with a single processor.
\end{theorem}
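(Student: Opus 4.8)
The plan is to follow the same three-part template used for Mode~5L, expressing in turn the number of processors, the tree height (which governs the sequential memory), and the ideal parallel time as functions of $n$, checking consistency along the way.

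First I would determine the number of base-level nodes, which in the convention used here is exactly the number of processors. By construction this is the least integer $k_1$ with $\sum_{j=1}^{k_1}\lceil\log_c(c+j)\rceil\geq n$; every base arity $u_{1,j}=\lceil\log_c(c+j)\rceil$ is at least~$1$, so the partial sums diverge, $k_1$ is well defined, and the tree is well formed. Comparing the sum with $\int_1^{k_1}\log_c(c+x)\,dx$ — the ceilings add only $O(k_1)$, which is of lower order than $k_1\log_c k_1$ — gives $\sum_{j=1}^{k_1}\lceil\log_c(c+j)\rceil\sim k_1\log_c k_1$, hence $k_1\log_c k_1\sim n$. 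Taking logarithms yields $\log_c k_1\sim\log_c n$, so $k_1\sim n/\log_c n=\Theta(n/\log n)$, which is the claimed processor count, and also $\log k_1=\log n-\log\log n+O(1)=\Theta(\log n)$.

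Next I would read off the height and the memory. Every level above the base is a plain $c$-ary level, so level $i\geq 2$ has $\lceil k_1/c^{\,i-1}\rceil$ nodes and $h=1+\lceil\log_c k_1\rceil=\Theta(\log_c(n/\log n))=O(\log(n/\log n))$; since a sequential implementation stores $\Theta(h)$ hash states — using the threshold variant of the highest-node-first strategy at the base level, whose arity is the non-constant quantity $\Theta(\log n)$, so that only $O(1)$ states are held per level — this already yields the stated memory bound. For the ideal parallel time I would unroll recurrence~(\ref{Parallel_time_def}) from the root down to the base level: it expresses the parallel time as $\Theta$ of the maximum, over all root-to-base-node paths, of $T(\text{base node})$ plus the sum of the position delays $(c-i+1)\alpha$ taken along that path. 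Each internal node has the constant arity~$c$, so each of the $h-1$ edges on a path contributes a delay in $[\alpha,c\alpha]$ and the delay term is $\Theta(h)=\Theta(\log_c k_1)$ for every path; a base node $(1,j)$ is processed in time $\Theta(\alpha u_{1,j})=\Theta(\log_c(c+j))$, maximal at $j=k_1$ where it equals $\Theta(\log_c k_1)$. Summing the two contributions, the ideal parallel time is $\Theta(\log_c k_1)=\Theta(\log k_1)=\Theta(\log n)=O(\log n)$.

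The only slightly delicate step is the asymptotic estimate of the slowly growing sum $\sum_{j=1}^{k_1}\lceil\log_c(c+j)\rceil$ and the inversion $k_1\log_c k_1\sim n\Rightarrow k_1\sim n/\log_c n$; once $k_1$ is pinned down, the top of the tree is an ordinary constant-arity tree, so the height, the memory and the delay bookkeeping in the parallel-time unrolling are immediate. The point to watch is that the increasing base arities do not by themselves blow up the parallel time: the rightmost base node has arity $\Theta(\log n)$, but forwarding its chaining value up to the root adds only $O(h)=O(\log n)$ more, so the $O(\log n)$ bound is preserved.
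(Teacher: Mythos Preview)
Your proposal is correct and follows essentially the same route as the paper's proof: estimate the number $k_1$ of base-level nodes from the asymptotic $\sum_{j}\lceil\log_c(c+j)\rceil\sim k_1\log_c k_1\sim n$, invert to $k_1\sim n/\log_c n$, then read off the height $1+\lceil\log_c k_1\rceil$ and the parallel time as the sum of the base-level arity $\Theta(\log n)$ and the $c$-ary-tree contribution $\Theta(h)$. Your version is in fact more explicit than the paper's (integral comparison for the sum, unrolling of recurrence~(\ref{Parallel_time_def}), mention of the highest-node-first variant for the memory bound), but the argument is the same.
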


\begin{proof}
Let us first look at the number of base level nodes.
For the sake of simplification, we seek the lowest $k$ such that $\sum_{j=3}^k \lceil \log_2 j \rceil \geq n$, which is
an equivalent problem. Not that, since $\sum_{j=3}^k \lceil \log_2 j \rceil/(\sum_{j=3}^{k-1} \lceil \log_2 j \rceil) \underset{k\to \infty}{\longrightarrow} 1$,
we have $\sum_{j=3}^k \lceil \log_2 j \rceil \sim n$. 
Asymptotically, we then seek $k$ such that $k \log k \sim \sum_{i=2}^k \log i \sim n$.
The solution meets the equivalence $\frac{n}{\log n} \sim \frac{k \log k}{\log(k) + \log \log(k)} \sim k$, which leads to the expected number of processors.
The average node arity of the first level is $\Theta(\log n)$ and we construct an $c$-ary tree on top of $\Theta(n/\log n)$ base level nodes.
The total parallel time is then $O(\log n)$ and the height $h$ of the tree is $\lceil \log_c(k) \rceil + 1 = O(\log(n/\log n))$.
\end{proof}

It is shown in Appendix \ref{appendix_param} that resource trade-offs are also possible using tree-growing strategies in which the arity does not vary in a same level 
but increases when getting close to the root node (\emph{i.e.} level-by-level).

\section{Conciliating interleaving and scalability}\label{Interleave_discussion}

In this section, we discuss approaches for hybrid Multithreading-SIMD implementations and the way to handle block interleaving.

\paragraph{Case of a fixed arity at each level.}
The interleaving of the message bits/blocks is an interesting feature if it is applied to nodes having a low number of children.
For instance, in Mode 2L, it is applied to a single node having $q$ children, and we can think that $q$ is small enough so that 
the memory space consumption is small as well. Even though \textsc{Sakura} allows the coding of an interleaving parameter for each node, we can wonder
if it is judicious to use it for all nodes in the tree. 
Let us imagine that our hash function is based on a tree structure with a logarithmic height and a constant arity (\emph{e.g.} all nodes are of arity 4).
In fact, with the assumption that message bits/blocks are processed in sequence, 
using interleaving for all nodes leads
to a memory consumption in the order of the size of the message. It is then preferable to use it for the nodes of a 
single level (or of a constant number of levels) in the tree.

Interleaving is bounded to a processor architecture, that is, it is difficult to imagine an interleaving mechanism 
that could be, in some sense, scaled on different architectures (with different register sizes). Having this in mind, we can still imagine an operating mode
that could be both bounded to a single architecture and scalable from the multithreading standpoint. In order
to use interleaving in the 
tree structures having nodes of high arity (possibly dependent of the message size),
we propose to add a single level of small and constant arity nodes. 
The interleaving parameter would be set\footnote{A value different from $\{I_{\infty}\}$ is used.} for all nodes of this level (and only this one).
The location where this level is inserted in the tree depends on the type of application (stored or live-streamed content).
In the case of stored content, this level would be inserted near the root. In the case of live-streamed content, this level would probably be inserted near
the base level in order to allow both the use of multithreading and SIMD instructions. In fact, in this second case, the most adequate location of the insertion
depends on the bandwidth with which the message is received. It should be inserted at the highest level\footnote{Considering 
a constant bandwidth and an infinitely large message, this level is clearly near the base level.} which allows the correct feeding of 
SIMD instructions executed in parallel by the processors.
Note that, in either case, the asymptotic complexity remains unchanged by this modification.

\paragraph{Case of increasing arities inside a level.} To evaluate several times the \emph{inner function} using SIMD instructions, it is desirable to 
have messages of same length. This requires another type of changes to the tree topology: One possibility is to increase 
the number of nodes of the base level by a (constant) multiplicative factor, thus leaving the height of the tree unchanged. 
We do not give a general method and
by way of example (see Figure~\ref{Mode_4L_interleaving_example}), we show how Mode~4L can be changed to support both 
multithreading and SIMD implementation, with or without block interleaving. 
We suppose that the cores have SIMD instructions operating on wide registers 
(\emph{e.g.} AVX2 with 256-bit YMM registers). For instance,
for exploiting a 4-way SIMD architecture, we multiply by $4$ the consecutive number of $i$-ary nodes at the base level and regroup them as quadruplets. 
We have a group of $4$ nodes of arity~$1$, followed by a group of $4$ nodes of arity $2$, followed by a group of $4$ nodes of arity $3$, and so on. 
If interleaving is desired, it is applied inside parts of increasing length (corresponding to the number of blocks
processed by a quadruplet of nodes), \emph{i.e.} the first group is responsible for the first part of $4$ blocks, ..., the $j$-th group is responsible 
for the $j$-th part of $4j$ blocks, and so on. 
Each part is divided into slices (64-bit qwords) that are distributed in a round-robin fashion among the nodes of a same group. Let us suppose
a message divided in qwords, \emph{i.e.} $M=s_1s_2 \ldots s_k$ where $|s_i|=64$, and let us denote $\textrm{Node}_{2,i}$ the $i$-th node of the second group. 
In this group, the slices are interleaved in the following way: $\textrm{Node}_{2,i}$ is composed of the slices $s_{32+i}$, $s_{36+i}$, $s_{40+i}$, ..., $s_{92+i}$.
Note that a slight update of \textsc{Sakura} coding is required to support interleaving in a small group of child nodes. 
In addition to the parameter $I=64$, the group size $n_I=4$ should also be encoded.

\begin{figure}[h!]
\centering
\scalebox{0.6}{
\input{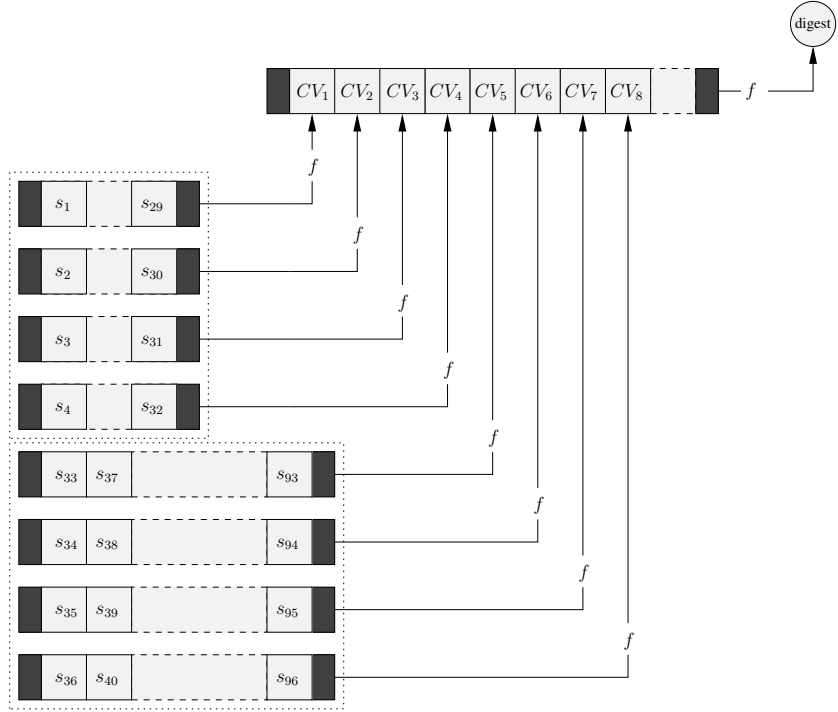}
}
\caption{Variant of Mode~4L (with a tree height~$2$) for the support of slice interleaving with a Multi-Core SIMD architecture. 
At the base level, only the first two groups are presented.
In a multithreaded implementation with a fixed thread pool, the group computations are 
distributed among the threads, but each thread is responsible for processing the nodes of a same group in their entirety, by means of a SIMD implementation.}
\label{Mode_4L_interleaving_example}
\end{figure}

\section{Concluding remarks}\label{concl}

In this paper, we introduced different ways of constructing a SHAKE function based on an inner function RawSHAKE and a parameterizable tree of nodes.
Resulting modes are \textsc{Sakura}-compatible and lead to interesting asymptotic bounds with regard to the memory usage in a sequential implementation, the 
ideal parallel time and the number of involed processors.
We showed new directions for tree hash modes with a focus on space-time trade-offs, whether these modes are intended for stored content or 
for streaming live content.
We think that it would be of interest to propose variants of our proposed modes using various optimizations
suitable to software implementations. 


\paragraph*{On the accuracy of complexity measures.} 
This paper has focused on the asymptotic analysis (using big-O) of tree hash modes but there is also an interest for their optimizations at finite distance 
(\emph{i.e.} by giving exact complexity results) in order to produce
efficient tree hash circuits. For instance, in the case of Mode 3,
an interesting work is to reduce at best the number of rounds\footnote{The running time of a round is one unit of time, that is, one call to the permutation.} 
of parallel calls to the permutation (\emph{i.e.} the depth of the circuit, which is in $O(\log n)$)
by taking into account the frame bits overheads of \textsc{Sakura}. A much more interesting challenge is to optimize the mode in depth and width: the goal is to
minimize the maximum number of parallel calls to the permutation that can occur in one round, subject to the constraint of a minimal number of rounds.
This kind of work was initiated in~\cite{AtRo17} with the use of a minimal coding for the nodes.

\paragraph*{Exploiting parallelism for the output extension.}
Besides, we supposed the result of the SHAKE function to be of fixed size  (512 bits), while its main functionality is to generate a hash of arbitrary length. 
If we nevertheless decide to allow this functionality by using the ``squeezing'' phase of the last evaluation of RawSHAKE (on the final node), 
our asymptotic results lose their relevance.
Indeed, the squeezing phase being purely sequential, if it is decided to generate a hash of, say, $n$ bits from a message of length $n$, 
the total number of evaluations of the permutation is then in $O(n)$, no matter the type of tree used to compress the message.
To remedy this problem, that one ``squeezing'' phase can be demultiplied using an inverted tree of type GGM (Goldreich-Goldwasser-Micali \cite{GGM84}) 
having for root the final node. 
In this configuration, there are two trees: one to compress the message and the other one to expand the hash to the desired length.
Nodes of the second tree are defined such that each node is a part of an $f$-image except for the root node which is an $f$-input.
Originally, GGM construction allows a long pseudo-random bitstring to be generated from a short (pseudo-)random seed, by means of a binary tree.
In our case, the used tree is not necessarily binary. Instead, its topology coherently depends on the parameters used in the first tree
with, nevertheless, a slight difference if the node arities are not fixed per level: For each node, the child node arities are in reverse order as compared 
to the first tree. Let us take an example where the digest of a 9-block message is computed thanks to Mode~4L with a tree height $h=2$ and we require this digest to be
of same length as the message. In the first tree, the base level node arities will be 2, 3 and 4 while they will be 4, 3 and 2 in the second tree.
More precisely, the arity of a node determines the length of the corresponding hash generated by RawSHAKE.
Indeed, it seems that these parameters can be reused in this second inverted\footnote{It is inverted in the sense that the flows of computation are from root to leaves.} 
tree so as to maintain the asymptotic parallel time. 
A GGM tree normally has all its leaves at the same depth. Its height is then chosen such that the total number of bits
of the leaves is greater than or equal to the desired hash length. In order to generate a hash of a given size, some of its branches should be pruned and 
one leaf should probably be truncated. 
We finally note that interleaving techniques could be used in this second tree to take better advantage of
SIMD instructions. We think that these last aspects deserve particular attention. 
If we do not want to use a GGM tree construction to expand the hash, an alternative is a Counter based PRG, \emph{e.g.} CTR-DRBG (Counter mode Deterministic 
Random Byte Generator \cite{BK12}) whose ideal parallel running time is constant.

\paragraph*{Benefits of using additional operators from the parallelism standpoint.} As a last remark, there were no modes referenced in our tables (Section~\ref{survey} 
and \ref{parameterizable_mode}) 
having an asymptotically sublogarithmic (or even constant) parallel running time. 
The reason is that using only $f$ as operator, such complexities are not possible. In order to make a parallel running time 
in $O(1)$ possible, the hashing mode should require a constant number of rounds of parallel calls to the underlying permutation, and the intermediate results of the last round
should be combined with a second operator of negligible cost (\emph{e.g.} the XOR operator). The \emph{Randomize-then-Combine} paradigm has been proposed by Bellare 
\emph{et al.} \cite{BM97}, with several examples of combining operators. The XOR operator must be avoided for arbitrary-length messages, 
due to an attack based on Gauss elimination. More generally, some of the proposed operators in finite groups are not resistant to the subexponential 
\emph{Generalized Birthday Attack} \cite{Wag02}. The possibility of such a scheme nevertheless looks interesting. 
As we have seen previously, the use of a traditional tree hash mode (using only $f$) by a set of computers having
different computational resources leads to a race to the bottom with respect to the potential parallelism.
Indeed, when transmitting a signed message, the parameters of the tree topology chosen by the transmitting computer are not necessarily optimal for the receiving one.
A hashing scheme with a $O(1)$ ideal parallel running time resolves this problem. Besides, if the combining operator is both associative and commutative, 
the distribution of work among the computing units can be done in any manner, without any tree topology constraint.

\bibliographystyle{plain}
\bibliography{trees} 

\begin{thebibliography}{10}

\bibitem{AMP10}
Elena Andreeva, Bart Mennink, and Bart Preneel.
\newblock {Security Reductions of the Second Round SHA-3 Candidates}.
\newblock Cryptology ePrint Archive, Report 2010/381, 2010.

\bibitem{AtRo17}
Kevin Atighehchi and Robert Rolland.
\newblock Optimization of tree modes for parallel hash functions: A case study.
\newblock {\em IEEE Transactions on Computers}, 66(9):1585--1598, Sept 2017.

\bibitem{ANWW13}
Jean-Philippe Aumasson, Samuel Neves, Zooko Wilcox-O'Hearn, and Christian
  Winnerlein.
\newblock {BLAKE2: Simpler, Smaller, Fast As MD5}.
\newblock In {\em Proceedings of the 11th International Conference on Applied
  Cryptography and Network Security}, ACNS'13, pages 119--135, Berlin,
  Heidelberg, 2013. Springer-Verlag.

\bibitem{BK12}
Elaine~B. Barker and John~M. Kelsey.
\newblock {SP 800-90A. Recommendation for Random Number Generation Using
  Deterministic Random Bit Generators}.
\newblock Technical report, Gaithersburg, MD, United States, 2012.

\bibitem{BM97}
Mihir Bellare and Daniele Micciancio.
\newblock A new paradigm for collision-free hashing: Incrementality at reduced
  cost.
\newblock In {\em Proceedings of the 16th Annual International Conference on
  Theory and Application of Cryptographic Techniques}, EUROCRYPT'97, pages
  163--192, Berlin, Heidelberg, 1997. Springer-Verlag.

\bibitem{BDPVsf07}
G.~Bertoni, J.~Daemen, M.~Peeters, and G.~Van Assche.
\newblock {Sponge functions}.
\newblock In {\em Ecrypt Hash Workshop}, May 2007.

\bibitem{BDPV09}
Guido Bertoni, Joan Daemen, Michael Peeters, and Gilles~Van Assche.
\newblock {Sufficient conditions for sound tree and sequential hashing modes}.
\newblock Cryptology ePrint Archive, Report 2009/210, 2009.

\bibitem{Sponge2011}
Guido Bertoni, Joan Daemen, Micha{\"{e}}l Peeters, and Gilles~Van Assche.
\newblock Cryptographic sponge functions, January 2011.

\bibitem{BDPA13}
Guido Bertoni, Joan Daemen, Micha{\"{e}}l Peeters, and Gilles~Van Assche.
\newblock {Keccak}.
\newblock In {\em Advances in Cryptology - {EUROCRYPT} 2013, 32nd Annual
  International Conference on the Theory and Applications of Cryptographic
  Techniques, Athens, Greece, May 26-30, 2013. Proceedings}, pages 313--314,
  2013.

\bibitem{BDPV14_Sak}
Guido Bertoni, Joan Daemen, Micha{\"{e}}l Peeters, and Gilles~Van Assche.
\newblock {Sakura: {A} Flexible Coding for Tree Hashing}.
\newblock In {\em Applied Cryptography and Network Security - 12th
  International Conference, {ACNS} 2014, Lausanne, Switzerland, June 10-13,
  2014. Proceedings}, pages 217--234, 2014.

\bibitem{BDPV14_Suf}
Guido Bertoni, Joan Daemen, Micha\"{e}l Peeters, and Gilles Van~Assche.
\newblock {Sufficient Conditions for Sound Tree and Sequential Hashing Modes}.
\newblock {\em Int. J. Inf. Secur.}, 13(4):335--353, August 2014.

\bibitem{BDPVV16}
Guido Bertoni, Joan Daemen, Michaël Peeters, Gilles~Van Assche, and Ronny~Van
  Keer.
\newblock {KangarooTwelve: fast hashing based on Keccak-p}.
\newblock Cryptology ePrint Archive, Report 2016/770, 2016.

\bibitem{rfc7228}
Carsten Bormann, Mehmet Ersue, and Ari Keränen.
\newblock {Terminology for Constrained-Node Networks}.
\newblock RFC 7228, October 2015.

\bibitem{BDS08}
Johannes Buchmann, Erik Dahmen, and Michael Schneider.
\newblock {Merkle Tree Traversal Revisited}.
\newblock In {\em Proceedings of the 2Nd International Workshop on Post-Quantum
  Cryptography}, PQCrypto '08, pages 63--78, Berlin, Heidelberg, 2008.
  Springer-Verlag.

\bibitem{Dam90}
Ivan Damg{\aa}rd.
\newblock {A Design Principle for Hash Functions}.
\newblock In {\em CRYPTO '89: Proceedings of the 9th Annual International
  Cryptology Conference on Advances in Cryptology}, pages 416--427, London, UK,
  1990. Springer-Verlag.

\bibitem{FLSWBKCW09}
Niels Ferguson, Stefan~Lucks Bauhaus, Bruce Schneier, Doug Whiting, Mihir
  Bellare, Tadayoshi Kohno, Jon Callas, and Jesse Walker.
\newblock {The Skein Hash Function Family (version 1.2)}, 2009.

\bibitem{GGM84}
Oded Goldreich, Shafi Goldwasser, and Silvio Micali.
\newblock {How to Construct Random Functions (Extended Abstract)}.
\newblock In {\em 25th Annual Symposium on Foundations of Computer Science,
  West Palm Beach, Florida, USA, 24-26 October 1984}, pages 464--479, 1984.

\bibitem{Gue14}
Shay Gueron.
\newblock {Parallelized hashing via j-lanes and j-pointers tree modes, with
  applications to {SHA-256}}.
\newblock {\em {IACR} Cryptology ePrint Archive}, 2014:170, 2014.

\bibitem{GK12a}
Shay Gueron and Vlad Krasnov.
\newblock {Parallelizing message schedules to accelerate the computations of
  hash functions}.
\newblock {\em J. Cryptographic Engineering}, 2(4):241--253, 2012.

\bibitem{GK12b}
Shay Gueron and Vlad Krasnov.
\newblock {Simultaneous Hashing of Multiple Messages}.
\newblock {\em J. Information Security}, 3(4):319--325, 2012.

\bibitem{HBPT15}
O.~Hahm, E.~Baccelli, H.~Petersen, and N.~Tsiftes.
\newblock {Operating Systems for Low-End Devices in the Internet of Things: a
  Survey}.
\newblock {\em IEEE Internet of Things Journal}, PP(99):1--1, 2015.

\bibitem{kel14}
John Kelsey.
\newblock {W}hat {S}hould {B}e {I}n {A} {P}arallel {H}ashing {S}tandard?
  {NIST}, {SHA3 Workshop}, 2014.

\bibitem{pkcs1v22}
RSA Laboratories.
\newblock {{\settoheight{\dimen0}{PKCS }PKCS \kern-.05em
  \resizebox{!}{\dimen0}{\raisebox{\depth}{\#}}} 1 v2.2 {RSA} {C}ryptography
  {S}tandard}, 2012.

\bibitem{luk13}
Stefan Lucks.
\newblock {Tree hashing: {A} simple generic tree hashing mode designed for
  {SHA-2} and {SHA-3}, applicable to other hash functions}.
\newblock In {\em Early Symmetric Crypto (ESC)}, 2013.

\bibitem{Mer79}
Ralph~Charles Merkle.
\newblock {\em {Secrecy, Authentication, and Public Key Systems}}.
\newblock PhD thesis, Stanford, CA, USA, 1979.

\bibitem{fips202}
{National Institute of Standards and Technology}.
\newblock {{FIPS PUB 202: Secure Hash Standard (SHS)}}.
\newblock Technical report, aug 2015.

\bibitem{SHA_standard}
{National Institute of Standards and Technology}.
\newblock {Secure Hash Standard (SHS)}.
\newblock {Federal Information Processing Standards} Publication FIPS PUB 202,
  August 2015.

\bibitem{RPCKNN09}
Andrew Regenscheid, Ray Perlner, Shu jen Chang, John Kelsey, Mridul Nandi, and
  Souradyuti~Paul Nistir.
\newblock {The SHA-3 Cryptographic Hash Algorithm Competition}, 2009.

\bibitem{RABCDEKKLRSSSTY08}
Ronald~L. Rivest, Benjamin Agre, Daniel~V. Bailey, Christopher Crutchfield,
  Yevgeniy Dodis, Kermin Elliott, Fleming~Asif Khan, Jayant Krishnamurthy,
  Yuncheng Lin, Leo Reyzin, Emily Shen, Jim Sukha, Drew Sutherland, Eran
  Tromer, and Yiqun~Lisa Yin.
\newblock {The MD6 hash function: A proposal to NIST for SHA-3}, 2008.

\bibitem{NIST800-185}
NIST special~publication 800-185.
\newblock {SHA-3 derived functions: cSHAKE, KMAC, TupleHash and Parallel-Hash}.
\newblock Technical report, aug 2016.

\bibitem{SZY04}
Michael Szydlo.
\newblock {Merkle Tree Traversal in Log Space and Time}.
\newblock In {\em Advances in Cryptology - {EUROCRYPT} 2004, International
  Conference on the Theory and Applications of Cryptographic Techniques,
  Interlaken, Switzerland, May 2-6, 2004, Proceedings}, pages 541--554, 2004.

\bibitem{Wag02}
David Wagner.
\newblock A generalized birthday problem.
\newblock In {\em Proceedings of the 22Nd Annual International Cryptology
  Conference on Advances in Cryptology}, CRYPTO '02, pages 288--303, London,
  UK, UK, 2002. Springer-Verlag.

\end{thebibliography}

\section*{Appendix}

\appendix

\section{Additional parameters for live-streamed content}\label{appendix_param}

\paragraph*{Mode WC (Weak Compromise).} We rework the tree structure of Mode~3
by fixing the arity of the nodes to a constant $k$ which is a power of $2$. For instance, if $k=2^4$, the number of hash states 
of a sequential implementation is approximately divided by $4$, but in return, the parallel running time is approximately multiplied by $4$. 
Generally, if $k=2^i$ where $i$ is a positive integer, then the memory consumption of a sequential implementation is approximately divided by $i$ while the parallel
running time is multiplied by $k/\log_2 k$. There is no 
change asymptotically (using Big-O).~\\

One way to reduce the memory consumption is to use an increasing sequence $(u_i)_{i \geq 1}$ of arities, where $u_i$ is the arity of level $i$. We thus seek
a sequence $(u_i)_{i \geq 1}$ meeting the constraint $\prod_{i=1}^h u_i \geq n$ where the integer $h$ and the sum $\sum_{i=1}^h u_i$ reflect
the desired complexities, in terms of memory usage and parallel time.
First, we note that for this kind of tree it is impossible to adapt Mode 4S to support live-streamed content while keeping the same complexities.
Suppose that the tree height $h$ is constant. It is indeed impossible to have a finite sequence $(u_i)_{i=1 \ldots h}$ meeting
this constraint for all possible message size $n$, and where the terms $u_i$ do not depend on $n$.
However, in what follows, we show that certain sequences of arities give interesting asymptotic efficiencies.

\paragraph{Mode B1.} The arities $u_i$ of the levels $i \geq 1$ grow as follows:
$$u_i = 2^i\ \forall i \geq 1.$$

\begin{theorem}
Suppose that we have as many processors as we want. Mode~B1 has $O\left(n^{\frac{1}{\sqrt{2 log n}}}\right)$ parallel running time and requires
$O\left(\sqrt{\log n}\right)$ hash states in memory to process the message with a single processor.
\end{theorem}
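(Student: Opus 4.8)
The two quantities the statement constrains are the height $h$ of the produced tree and the sum of level arities $\sum_{i=1}^{h} u_i$: the first governs the sequential memory, since a \emph{highest node first} evaluation of a height-$h$ tree keeps $\Theta(h)$ hash states regardless of the arities, and the second governs the ideal parallel time, since with a fixed arity per level the ideal time is $\Theta\!\left(\sum_{i=1}^{h}(a u_i + b)\right)$ by Equation~(\ref{Parallel_time_def}). So the plan is simply: determine $h$ from the consistency requirement $\prod_{i=1}^{h} u_i \ge n$, then read off both complexities from the two sums above.

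With $u_i = 2^i$ one has $\prod_{i=1}^{h} u_i = 2^{1+2+\cdots+h} = 2^{h(h+1)/2}$, so $h$ is the least integer with $h(h+1)/2 \ge \log_2 n$, i.e. $h(h+1) \ge 2\log_2 n$. Solving this quadratic gives $h \sim \sqrt{2\log_2 n}$, hence $h = \Theta(\sqrt{\log n})$; since a single-processor evaluation needs $\Theta(h)$ hash states, this already yields the announced $O(\sqrt{\log n})$ memory bound. For the time, $\sum_{i=1}^{h} u_i = 2^{h+1} - 2 = \Theta(2^{h})$, and in $\sum_{i=1}^{h}(a u_i + b) = a(2^{h+1}-2) + bh$ the additive term $bh$ is negligible against $2^{h}$, so the ideal parallel running time is $\Theta(2^{h})$.

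It then remains to substitute $h \sim \sqrt{2\log_2 n}$ into $2^{h}$ and re-express the result as a power of $n$ via $2^{h} = n^{\,h/\log_2 n}$, which puts the ideal parallel running time in the closed form recorded in the statement. The only points needing a little care --- and where I expect essentially all of the (light) work to lie --- are the clean asymptotic inversion of the quadratic height relation $h(h+1)\ge 2\log_2 n$ and the check that it is undisturbed by the ceilings $\lceil\cdot\rceil$ in the per-level node counts $\lceil n/\prod_{j\le i} u_j\rceil$, together with the routine verification that the $bh$ contribution is dominated by $2^{h}$. There is no real combinatorial obstacle: the theorem is just these two elementary estimates combined.
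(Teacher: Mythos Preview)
Your argument is correct and follows essentially the same route as the paper's proof: determine the height from $\prod_{i=1}^{h} 2^{i} = 2^{h(h+1)/2}\ge n$, conclude $h\sim\sqrt{2\log_2 n}$ for the memory bound, then bound the ideal parallel time by $\sum_{i=1}^{h} 2^{i}\sim 2^{h}$ and rewrite $2^{h}=n^{\,h/\log_2 n}$. The only additions you make beyond the paper are the explicit remarks that the $bh$ term and the ceilings in the per-level node counts are negligible, both of which are indeed harmless.
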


\begin{proof}
Asymptotically, the amount of memory consumed in a sequential implementation
corresponds, in the worst case, to the height $k$ of the tree.
We seek the lowest $k$ such that $\prod_{i=1}^k u_i\geq n$.
 We have $\prod_{i=1}^{k} 2^i \geq n$ which is equivalent to $\sum_{i=1}^k i \geq \frac{\log n}{\log 2}$. Since this sum is positive and polynomial in $k$, 
 we have $\frac{k^2}{2} \sim \frac{\log n}{\log 2}$, \emph{i.e.} $k=O(\sqrt{2\log})$. The parallel time is an affine function of 
 $\sum_{i=1}^k 2^i \sim 2^{k} = O\left(n^{\frac{1}{\sqrt{2 \log_2 n}}}\right)$.
\end{proof}

\paragraph{Mode B2.} The arities $u_i$ of the levels $i \geq 1$ grow as follows:
$$u_i = i+1\ \forall i \geq 1.$$

\begin{theorem}
Suppose that we have as many processors as we want. Mode~B2 has $O\left(\frac{\log^2 n}{\log^2 \log n}\right)$ parallel running time and requires
$O\left(\frac{\log n}{\log \log n}\right)$ hash states in memory to process the message with a single processor.
\end{theorem}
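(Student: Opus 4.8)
The plan is to reuse the template of the Mode~B1 proof, since Mode~B2 is again a tree with a fixed, level-dependent arity $u_i=i+1$, so the analysis under ``Case of a fixed arity at each level'' in Section~\ref{parameterizable_mode} applies. First I would check consistency: as $u_i=i+1\ge 2$ is an increasing sequence of integers, there is a smallest height $h=h(n)$ with $\prod_{i=1}^{h}u_i=\prod_{i=1}^{h}(i+1)=(h+1)!\ \ge\ n$, and minimality gives $h!<n\le(h+1)!$.

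Next I would estimate $h(n)$. Taking logarithms and applying Stirling's formula $\log(m!)=m\log m-m+O(\log m)\sim m\log m$, both $\log(h!)$ and $\log((h+1)!)$ are asymptotic to $h\log h$, so squeezing $\log n$ between them yields $h\log h\sim\log n$. From this, $\log h+\log\log h\sim\log\log n$, and since $\log\log h=o(\log h)$ we get $\log h\sim\log\log n$, hence $h(n)=\dfrac{h\log h}{\log h}\sim\dfrac{\log n}{\log\log n}$.

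The two claimed bounds then follow at once. The sequential memory consumption is, in the worst case, the height of the tree (via the ``highest node first'' strategy recalled in Section~\ref{prel}), hence $\Theta(h)=O\!\left(\frac{\log n}{\log\log n}\right)$ hash states. For the ideal parallel running time I would invoke the fact, recalled in Section~\ref{parameterizable_mode}, that it is ``accurately related to'' $\sum_{i=1}^{h}(a\,u_i+b)$; with $u_i=i+1$ this is $\Theta\!\left(\sum_{i=1}^{h} i\right)=\Theta(h^2)=O\!\left(\frac{\log^2 n}{\log^2\log n}\right)$, as desired.

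The only genuinely delicate step is the asymptotic inversion $h\log h\sim\log n\Rightarrow h\sim\log n/\log\log n$ together with a clean handling of the floor implicit in ``the smallest $h$'' (equivalently, one may phrase it through the Lambert function as $h\sim\log n/W_0(\log n)$ with $W_0(x)\sim\log x$); the rest is routine bookkeeping with Stirling and with the two quantities --- tree height and sum of level arities --- that the paper has already linked to memory usage and to parallel time respectively.
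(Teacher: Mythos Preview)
Your proposal is correct and follows essentially the same approach as the paper: identify the smallest height $h$ with $\prod_{i=1}^{h}(i+1)\ge n$, use Stirling to obtain $h\log h\sim\log n$, invert to $h\sim\log n/\log\log n$, and then sum the level arities to get $\Theta(h^2)$ for the parallel time. Your version is, if anything, slightly more careful about the squeeze coming from minimality of $h$ and the floor issue; the paper's proof is otherwise identical in structure.
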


\begin{proof}
Asymptotically, the amount of memory consumed in a sequential implementation
corresponds, in the worst case, to the height $k$ of the tree.
We seek the lowest $k$ such that $\prod_{i=1}^k u_i\geq n$.
It has been shown before that $\log( \prod_{i=1}^k u_i) \sim \log n$.
According to the Stirling Formula, $k! \sim k^ke^{-k}\sqrt{2\pi k}$. Thus, we have
$\log n\sim \log(k^ke^{-k}\sqrt{2\pi k})=k\log k -k+\frac{1}{2}\log(2\pi k)\sim k\log k$. Hence,
$\log\log n \sim \log (k\log k)=\log k + \log\log k\sim \log k$. We finally obtain
$k\sim\dfrac{\log n}{\log k}\sim \dfrac{\log n}{\log\log n}$. If we have as many processors as we want, then
the amount of blocks processed sequentially is asymptotically the sum of the $k$ level arities. Considering that $\sum_{i=2}^k i \sim \frac{k^2}{2}$,
we deduce the expected result.

\end{proof}

\paragraph{Mode B3.} The arities $u_i$ of the levels $i \geq 1$ grow as follows:
$$u_i =  \left\lfloor \log (i+3) \right\rfloor\ \forall i \geq 1.$$

\begin{theorem}
Suppose that we have as many processors as we want. Mode~B3 has $O\left(\frac{\log n \log \log n}{\log \log \log n}\right)$ parallel running time and requires
$O\left(\frac{\log n}{\log \log \log n}\right)$ hash states in memory to process the message with a single processor.
\end{theorem}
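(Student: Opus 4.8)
The plan is to follow the same template used for Modes~B1 and~B2. The memory consumption of a sequential implementation equals, in the worst case, the height $k$ of the tree, which is the smallest integer satisfying $\prod_{i=1}^k u_i \geq n$. So I would first pin down $k$ as a function of $n$, and then read off the ideal parallel time, which by Equation~(\ref{Parallel_time_def}) is an affine function of $\sum_{i=1}^k u_i$.

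To locate $k$: minimality gives $\prod_{i=1}^{k-1} u_i < n \le \prod_{i=1}^{k} u_i$, hence $\log n \le \sum_{i=1}^{k}\log u_i < \log n + \log u_k$. Since $\log u_k = O(\log\log k) = o(\log n)$, this yields $\sum_{i=1}^{k}\log u_i \sim \log n$, exactly as in the proof of Mode~B2. Next I would evaluate the left-hand side: $\log u_i = \log\lfloor \log(i+3)\rfloor \sim \log\log i$, and because $i\mapsto\log\log i$ is slowly varying, $\sum_{i=2}^{k}\log\log i \sim k\log\log k$ (comparison with $\int_2^k \log\log t\,dt = k\log\log k - \mathrm{li}(k) + O(1) \sim k\log\log k$, the mass concentrating near $t=k$). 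Therefore $k$ is characterised by $k\log\log k \sim \log n$.

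Inverting this implicit relation is where the extra iterated logarithm appears. Taking logarithms, $\log k + \log\log\log k \sim \log\log n$, and since $\log\log\log k = o(\log k)$ we get $\log k \sim \log\log n$, hence $\log\log k \sim \log\log\log n$. Substituting back, $k \sim \frac{\log n}{\log\log k} \sim \frac{\log n}{\log\log\log n}$, which is the announced memory bound. For the parallel time, it is affine in $\sum_{i=1}^{k} u_i = \sum_{i=1}^{k}\lfloor\log(i+3)\rfloor \sim \sum_{i=2}^{k}\log i = \log(k!) \sim k\log k$ by Stirling (again $\log$ slowly varying). Using $k \sim \log n/\log\log\log n$ and $\log k \sim \log\log n$, this is $k\log k \sim \frac{\log n\,\log\log n}{\log\log\log n}$, as claimed.

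The main obstacle — really the only non-mechanical point — is justifying the two slowly-varying-sum asymptotics, $\sum_{i=2}^{k}\log\log i \sim k\log\log k$ and $\sum_{i=2}^{k}\log i \sim k\log k$, and checking that replacing $\log\lfloor\log(i+3)\rfloor$ by $\log\log i$ (and dropping the floor inside $u_i$ itself) contributes only lower-order error to these sums; everything else is the same bookkeeping as for Modes~B1 and~B2, with one additional level of iterated logarithm when solving $k\log\log k \sim \log n$ for $k$.
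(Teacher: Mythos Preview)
Your proposal is correct and follows essentially the same route as the paper's proof: reduce to $\sum_{i}\log\log i \sim \log n$, approximate the sum by $k\log\log k$, invert to get $k\sim \log n/\log\log\log n$, and then evaluate $\sum_i u_i \sim k\log k$ via Stirling. If anything, you are more explicit than the paper about the sandwich $\prod_{i<k}u_i < n \le \prod_{i\le k}u_i$, about the inversion step, and about why the floor and the shift $i\mapsto i+3$ are asymptotically irrelevant.
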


\begin{proof}
For the sake of simplification, we can seek the lowest $k$ such that \break $\prod_{i=2}^k \log i \geq n$, which is
an equivalent problem. Applying the $\log$ function to both sides, we can show that $\sum_{i=2}^k \log \log i \sim \log n$. 
Asymptotically, we can then seek $k$ such that $k \log \log k \sim \sum_{i=2}^k \log \log i \sim \log n$.
The solution $k$ is such that $k \sim \frac{\log n}{\log \log \log n}$.
Regarding the parallel running time, it follows that
$\sum_{i=2}^k \log i \sim k \log k \sim \frac{\log n}{\log \log\log n} \log \left(\frac{\log n}{\log \log\log n}\right)$, 
giving the expected result.
\end{proof}

Note that this paper does not investigate sequences behaving like superfactorial or hyperfactorial, which probably lead to other 
interesting trade-offs.

As for the modes dedicated to stored content, we can reduce the number of processors allowing the ideal parallel running time to be reached 
by means of a rescheduling technique. Indeed,
since the processors have more computations to do at higher levels of the tree (and the lower level nodes have smaller arity), we could assign to each 
processor a subtree that represents the amount of computations of a higher level node. In doing so, the parallel running time is only increased by a constant factor.
This aspect will be developed in future works. 

\end{document}